\documentclass[10.5pt]{article}

\usepackage[margin=1.1in]{geometry}
\usepackage{lmodern}
\usepackage[T1]{fontenc}
\usepackage[utf8]{inputenc}
\usepackage{amsmath}
\usepackage{amssymb}
\usepackage{amsthm}
\usepackage{booktabs}
\usepackage{graphicx}
\usepackage{fancyvrb}
\usepackage[authoryear,round]{natbib}
\usepackage{bookmark}
\hypersetup{colorlinks=true, linkcolor=blue, citecolor=blue, urlcolor=blue}
\AtBeginDocument{}

\newcommand{\pkg}[1]{\textbf{#1}}
\newcommand{\proglang}[1]{\textsf{#1}}
\makeatletter
\newcommand\code{\bgroup\@makeother\_\@makeother\~\@makeother\$\@codex}
\def\@codex#1{{\normalfont\ttfamily\hyphenchar\font=-1 #1}\egroup}
\makeatother

\newcommand{\fct}[1]{\code{#1()}}

\DefineVerbatimEnvironment{CodeInput}{Verbatim}{fontsize=\small}
\DefineVerbatimEnvironment{CodeOutput}{Verbatim}{fontsize=\small}

\theoremstyle{plain}
\newtheorem{theorem}{Theorem}[section]
\theoremstyle{definition}
\newtheorem{assumption}{Assumption}[section]
\theoremstyle{remark}
\newtheorem*{remark}{Remark}

\title{\pkg{smartcor}: Intelligent Correlation Method Selection for Mixed Variable Types}
\author{M.~Harshvardhan\\American University of Sharjah\\\texttt{harshvardhan@aus.edu} \and Pritam Ranjan\\Indian Institute of Management Indore\\\texttt{pritamr@iimidr.ac.in}}
\date{}

\begin{document}

\maketitle

\begin{abstract}
Pearson correlation is the default measure of association in most statistical software, yet it is only appropriate for pairs of continuous variables with a linear relationship.
When variables are binary, ordinal, or categorical, specialized methods (e.g., point-biserial, polychoric, tetrachoric, and Cram\'{e}r's~$V$) may be more appropriate, but practitioners rarely know which to select.
The \pkg{smartcor} package for \proglang{R} (and its companion \pkg{pysmartcor} package for \proglang{Python}) automatically detects variable types, selects the statistically appropriate correlation method for each pair, and explains its reasoning.
The package supports 14~correlation and association methods covering all 10~variable-type pair combinations, and distinguishes true correlation (for ordinal and continuous pairs) from statistical association (for nominal categorical pairs).
Monte~Carlo simulations validate the selection logic, and a case study with General Social Survey data demonstrates substantive differences between naive and type-aware correlation analysis.

\medskip
\noindent\textit{Keywords:} correlation coefficient, summary statistic, discrete data, Likert scale, software.
\end{abstract}

\section{Introduction} \label{sec:intro}

Correlation table is among the most fundamental summary statistic computed in applied statistics.
Whether assessing the relationship between income and education, treatment assignment and health outcomes, or customer satisfaction and retention, researchers routinely compute correlation coeﬃcients as a first step in exploratory data analysis.
Yet in practice, the overwhelming majority of applied work defaults to Pearson correlation regardless of variable type.
A large-scale review of more than 18,000 psychology articles found that Pearson correlation was the single most common measure of association, used in more than twice as many studies as the next alternative, Spearman's rank correlation \citep{de2016comparing}.
Pearson's dominance reflects not a considered methodological choice but a software default: \code{cor()} in \proglang{R} \citep{R-base}, \code{numpy.corrcoef()} in \proglang{Python}, and their equivalents in SPSS, Stata, and SAS, all return Pearson correlation by default — even though the method proposed by \citet{pearson1895notes} assumes continuous variables with a linear relationship.

Scalar variables in real datasets come in five fundamental types: continuous, binary, ordinal, count, and categorical. 
A \textit{continuous} variable can take any real value in an interval; temperature, air pressure, and fuel efficiency are typical examples. 
A \textit{binary} variable takes only two values, such as Yes/No or Pass/Fail. An \textit{ordinal} variable carries an inherent ordering but takes only a few discrete labels, and these labels often serve as indicators of an unobserved latent continuous variable; Likert scales are the canonical example. 
A \textit{count} variable takes non-negative integer values \{0, 1, 2, 3, \ldots\} and, unlike ordinal variables, arises from counting rather than ranking, and typically does not have an upper bound. 
Examples include the number of interviews a candidate attends or phone calls received per hour. From a correlation standpoint, all \textit{count} variables are treated as continuous variables and thus not discussed separately at many places in this paper. Finally, a \textit{categorical} variable takes unordered class labels with no inherent ranking, such as country of origin, brand name, or product category. Thus, the total number of variable-type pairs (unordered pairs among continuous, binary, ordinal and categorical) is equal to ${4 \choose 2} +4 = 10$. Correlation or association can be computed for all ten pairwise combinations of these types.
As we detail in \autoref{sec:when}, Pearson correlation formula is correct or equivalent only for three of these ten variable-type combinations.  Pairings among continuous, binary, ordinal, and count variables yield correlation coefficients bounded by $-1\leq r \leq 1$, though the specific estimator differs by pairing.  On the other hand, because categorical variables carry no ordering, their relationship with any other type is measured as \textit{association} on a scale of $0 \leq a \leq 1$.  The available options are detailed in \autoref{tab:correlation-types} and discussed later in this paper.

\begin{table}[!h]
\centering
\caption{Correlation and association measures supported by \pkg{smartcor}, organized by variable-type pair.}
\label{tab:correlation-types}
\begin{tabular}{lcccc}
\toprule
 & Continuous & Binary & Categorical & Ordinal \\
\midrule
Continuous & Pearson & Point-biserial & Cram\'{e}r's $V^{*\dagger}$ & Polyserial \\
 & Spearman &  &  & Spearman \\
 & Kendall's $\tau$ &  &  & Kendall's $\tau$ \\
\midrule
Binary &  & Phi ($\phi$) & Cram\'{e}r's $V^{\dagger}$ & Rank-biserial \\
 &  & Tetrachoric &  &  \\
 &  & Yule's $Q$ &  &  \\
\midrule
Categorical &  &  & Cram\'{e}r's $V^{\dagger}$ & Cram\'{e}r's $V^{\dagger}$ \\
 &  &  & Theil's $U^{\dagger}$ &  \\
 &  &  & Tschuprow's $T^{\dagger}$ &  \\
\midrule
Ordinal &  &  &  & Polychoric \\
 &  &  &  & Kendall's $\tau$ \\
 &  &  &  & Kruskal's $\gamma$ \\
\bottomrule
\multicolumn{5}{l}{\footnotesize $^{*}$Continuous variable binned into quintiles.}\\
\multicolumn{5}{l}{\footnotesize $^{\dagger}$Association measure, not correlation (see text).}
\end{tabular}
\end{table}

A major source of error in social science research is the failure to match the correlation measure to the variables' underlying measurement type \citep{de2016comparing}.
Our mini-study (\autoref{sec:literature}) shows the scale of the problem: Pearson's coefficient accounts for 92.3\% of the correlations reported in economics papers in leading journals.
Yet Pearson is defensible only for the continuous-continuous, continuous-binary, and binary-binary cases; in the latter two it reduces algebraically to the \textit{point-biserial} and \textit{phi} coefficients, as we prove in the \textit{Appendix}.
For every other pairing it returns a number without a meaningful interpretation.
The recurring version of this error is treating an ordinal or count variable as continuous, since discreteness is easily mistaken for continuity, and doing so can distort estimates and support misleading conclusions.
Using Pearson correlation for ordinal variables actually is the most common error in social science research, as evidenced from our mini-study in \autoref{sec:literature}.
In reality, what practitioners often lack is clear guidance on pairing the measure to the data, which \texttt{smartcor} supplies automatically.
Thus, using the wrong method leads to three categories of errors:

\begin{enumerate}
  \item \textbf{Attenuation.} When continuous constructs are observed through ordinal proxies such as Likert scales, Pearson correlation systematically underestimates the true association. Our simulations (\autoref{sec:attenuation}) show that with five ordinal categories and true $\rho = 0.6$, Pearson correlation underestimates $\rho$ by approximately $0.05$ (a relative bias of about $-9\%$), while polychoric correlation recovers the true value with near-zero bias.
  \item \textbf{Meaningless results.} Applying Pearson correlation to nominal variables (religion, region, brand name) produces numbers without meaningful interpretation, since the numeric codes are arbitrary. For such pairs, association measures like Cram\'{e}r's~$V$ \citep{cramer1946} can be used, and standard correlation measures do not work.
  \item \textbf{Untested assumptions.} Polychoric and tetrachoric correlations recover latent associations between ordinal pairs but assume underlying bivariate normality \citep{poon1987maximum, drasgow1986polychoric}. When this assumption is violated (i.e., when the latent distribution is skewed or heavy-tailed), these estimators exhibit positive bias. Practitioners rarely test or acknowledge this assumption.
\end{enumerate}

These errors are compounded by a fragmented software landscape.
Computing appropriate correlations for a mixed-type dataset in \proglang{R} currently requires combining functions from multiple packages: \code{cor()} from \pkg{stats} for Pearson, Spearman, and Kendall; \code{polychor()} and \code{polyserial()} from \pkg{polycor} \citep{R-polycor} for latent-variable methods; \code{tetrachoric()} from \pkg{psych} \citep{R-psych} for binary pairs; and manual contingency-table calculations for Cram\'{e}r's~$V$, Theil's~$U$, and related association measures, as they didn't have an immediate built-in function in \proglang{R}/\proglang{Python}.
Closest to our work is the \pkg{correlation} package \citep{R-correlation} which provides a unified interface but does not auto-detect all four variable types, does not cover all 10~type-pair combinations, and does not explain its method choices.
As we stated earlier, we distinguish \emph{correlation} (when both variables are \textit{continuous}, \textit{binary}, \textit{ordinal} or \textit{count}) from \emph{association} (when at least one variable is \textit{categorical}) between a pair of variables; existing software rarely makes this distinction explicit.

The \pkg{smartcor} package for \proglang{R} and \proglang{Python} addresses this gap.\footnote{Unless stated otherwise, every method in \pkg{smartcor} for \proglang{R} has an equivalent in \pkg{pysmartcor} for \proglang{Python}.} Given any pair of variables or a full data frame of mixed types, \pkg{smartcor} \textbf{detects} each variable's type (continuous, binary, ordinal, count, or categorical) using a configurable heuristic; \textbf{selects} the statistically appropriate method from a library of 14~correlation and association measures, with selection logic grounded in the simulation evidence presented in \autoref{sec:when}; \textbf{tests} for latent normality assumption between two ordinal variables to pick most suitable method, and \textbf{explains} its reasoning through a human-readable \code{rationale} field.
Users retain full control: any automatic choice can be \textbf{overridden} by specifying a method explicitly.
A companion \proglang{Python} package with an identical API is described in \autoref{sec:python}.

The remainder of this paper is organized as follows.
\autoref{sec:when} presents the statistical background: when method choice matters, when Pearson correlation suffices, and how attenuation and violated normality assumptions affect estimates.
\autoref{sec:package} describes the \pkg{smartcor} \proglang{R} package, including its design, type detection, method selection, and core functions.
\autoref{sec:casestudy} demonstrates the package on General Social Survey data.
\autoref{sec:comparison} compares \pkg{smartcor} with existing software.
\autoref{sec:python} describes the \proglang{Python} companion package.
\autoref{sec:summary} concludes with limitations and future directions.

\section{When does the choice of correlation method matter?} \label{sec:when}

Correlation measures the strength and direction of statistical dependence between two random variables.
Its value typically ranges between $-1$ and $+1$, with $0$ indicating no association.
The appropriate measure depends on the types of variables being correlated.
Previously presented \autoref{tab:correlation-types} summarizes the 14 different correlation methods supported by \pkg{smartcor}, organized by variable-type pair.

\noindent
An important distinction underlines \autoref{tab:correlation-types}: the difference between \emph{correlation} and \emph{association}.
\emph{Correlation} quantifies the strength and direction of a monotonic or linear relationship between two variables.
It requires that both variables possess at least an ordinal measurement scale---that is, their values must have a meaningful ordering (continuous, count, binary, or ordinal).
A Pearson coefficient of $r = -0.7$ tells us that higher values of one variable have strong chance to pair-up with lower values of the other, and the sign carries directional information.
All correlation methods in Table~\ref{tab:correlation-types} without a dagger ($\dagger$) symbol produce signed values in $[-1, +1]$.

\emph{Association}, by contrast, measures statistical dependence between variables whose levels have no inherent ordering.
When a variable is categorical (nominal)---religion, region, treatment group---the numeric codes are arbitrary, and neither direction nor monotonicity is defined.
Some binary variables may be categorical: when a scalar variable with two (and only two) unique values exists, it is a binary categorical variable and must use an association measure instead of correlation measures.
Measures such as Cram\'{e}r's~$V$, Theil's~$U$, and Tschuprow's~$T$ test a different question: are the two variables independent, or does knowing the level of one reduce uncertainty about the other?
These measures are bounded between $0$ (complete independence) and $1$ (perfect association), have no sign, and cannot be interpreted as directional relationships.
A Cram\'{e}r's~$V$ of $0.3$ means the variables are moderately dependent, but it says nothing about which levels covary or in which direction.
All methods in Table~\ref{tab:correlation-types} with a dagger ($\dagger$) symbol produce unsigned values on $[0, +1]$.

Three practical consequences follow from this distinction:
\begin{enumerate}
  \item \textbf{No direct comparison.} A polychoric correlation of $0.5$ and
    a Cram\'{e}r's~$V$ of $0.5$ do not mean the same thing. The former
    estimates a latent linear relationship; the latter quantifies departure
    from independence. Combining both in a single matrix (as
    \code{smart\_cormat()} does) is useful for exploratory purposes, but the
    values should not be ranked against each other as if they were on a
    common scale.
  \item \textbf{When to use association measures.} Cram\'{e}r's~$V$ and its
    relatives are appropriate when at least one variable is truly nominal:
    its categories cannot be meaningfully ordered (e.g., country, species,
    diagnosis code). If the variable has a natural ordering that the analyst
    has chosen to ignore (e.g., education coded as a factor rather than an
    ordered factor), recoding it as ordinal and using a correlation method
    is preferable.
  \item \textbf{When to ignore them.} If the research question concerns
    the strength of a directional relationship (``does higher education
    predict higher income?''), association measures cannot answer it.
    They detect dependence but not direction. In such cases, computing
    Cram\'{e}r's~$V$ between an ordinal and a categorical variable
    discards the ordinal information and should be treated as a rough
    screening tool rather than a definitive result.
\end{enumerate}

The most commonly used correlation methods fall into three families.
\emph{Pearson correlation} \citep{pearson1895notes, benesty2009pearson} measures linear relationship between two continuous variables; it is a parametric method assuming bivariate normality.
\emph{Spearman's rank correlation} \citep{spearman1904proof} and \emph{Kendall's~$\tau$} \citep{kendall1938new, kendall1945treatment} are nonparametric rank-based methods that capture monotonic relationships; they are robust to outliers and applicable to ordinal data.
\emph{Latent-variable methods} (polychoric \citep{poon1987maximum, drasgow1986polychoric}, polyserial, and tetrachoric correlation) assume that observed ordinal or binary variables arise from underlying continuous normal latent variables and estimate the correlation between those latent variables.
Finally, \emph{categorical association measures} including Cram\'{e}r's~$V$ \citep{cramer1946}, Theil's~$U$ \citep{theil1967economics}, and Tschuprow's~$T$ \citep{tschuprow1939principles} quantify dependence for unordered categorical variables through chi-square or entropy-based statistics; they belong to a fundamentally different class than the correlation methods discussed above.

When is it safe to simply use Pearson?
When does the choice matter?
The following subsections answer these questions: first by defining each method precisely, then by presenting simulation evidence, with mathematical proofs deferred to Appendix~\ref{app:pearson-biserial} and Appendix~\ref{app:pearson-phi}.

Among all 10 variable-type pairs discussed in \autoref{tab:correlation-types} Pearson correlation is the correct choice for exactly three of them: continuous--continuous (the textbook case), continuous--binary (where Pearson coincides with the point-biserial coefficient under 0/1 coding), and binary--binary (where Pearson coincides with the phi ($\phi$) coefficient). Theorem~2.1 and Theorem~2.2 in \autoref{sec:equiv} establish the equivalence of Pearson with point-biserial and phi, respectively. For the remaining seven combinations, Pearson is biased, attenuated, or undefined: it underestimates the association for binary and ordinal variables (\autoref{sec:attenuation}), it ignores the latent structure of ordinal scales, and it is mathematically undefined when one of the variables is unordered categorical.
A practitioner who reaches for \code{cor()} in \proglang{R}, \code{numpy.corrcoef()} in \proglang{Python}, or similar functions in Stata or SAS, on a heterogeneous data set, therefore, has at most a $3/10$ chance of computing the appropriate measure for each pair and an overwhelming chance of reporting an attenuated or biased value somewhere in the matrix.
This is the central motivation for \pkg{smartcor}: routing each pair of variables to its correct estimator should not require the practitioner to remember which of the seven other cells they are in.

\subsection{Correlation methods by variable type} \label{sec:methods}

Each of the 14~methods in Table~\ref{tab:correlation-types} rests on distinct mathematical foundations.
This subsection provides concise definitions organized by the types of variables being correlated.
Readers who are already familiar with these formulas may skip ahead to \autoref{sec:equiv}.

\paragraph{(a) Continuous--continuous pairs.}

The \emph{Pearson correlation coefficient}, introduced by \citet{pearson1895notes}, measures the strength of the linear relationship between two continuous paired variables.
For a paired random variable ($X$, $Y$) with mean ($\mu_X, \mu_Y$) and standard deviation ($\sigma_X, \sigma_Y$), the population correlation coefficient is defined as
\begin{equation} \label{eq:pearson}
  \rho = \frac{\operatorname{Cov}(X, Y)}{\sigma_X \, \sigma_Y}
       = \frac{E[(X - \mu_X)(Y - \mu_Y)]}{\sigma_X \, \sigma_Y},
\end{equation}
with the sample version given by
\begin{equation} \label{eq:pearson_sample}
  r = \frac{\sum_{i=1}^{n} (x_i - \bar{x})(y_i - \bar{y})}
           {\sqrt{\sum_{i=1}^{n} (x_i - \bar{x})^2} \, \sqrt{\sum_{i=1}^{n} (y_i - \bar{y})^2}},
\end{equation}
where $\{(x_i, y_i), i=1,...,n\}$ is the observed paired sample and $\bar{x}, \bar{y}$ are the sample means. The denominator is the product of the sample standard deviations (up to the shared $n-1$ or $n$ factor, which cancels), so this expression is the direct sample analogue of \autoref{eq:pearson}.
The coefficient value ranges from $-1$ (perfect negative linear relationship) to $+1$ (perfect positive linear relationship), with $0$ indicating no linear dependence; the bound $|\rho| \leq 1$ follows from the Cauchy-Schwarz inequality
\citep{casella2002statistical}.
Pearson correlation is a parametric measure: it is most appropriate when the bivariate distribution is approximately normal and the relationship is linear \citep{benesty2009pearson}.

\emph{Spearman's rank correlation} \citep{spearman1904proof} is a nonparametric alternative that captures monotonic (not necessarily linear) relationship.
It is computed by applying the Pearson formula to the ranks of the observations rather than to their raw values.
When there are no tied ranks, a convenient shortcut is
\begin{equation} \label{eq:spearman}
  r_s = 1 - \frac{6 \sum_{i=1}^{n} d_i^2}{n(n^2 - 1)},
\end{equation}
where $d_i$ is the difference between the ranks of the $i$-th paired observations. Since it operates on ranks, Spearman correlation is robust to outliers and does not require normality.

\emph{Kendall's~$\tau$} \citep{kendall1938new} takes a different approach, evaluating monotonic relationship through pairwise comparisons.
A pair of observations $(x_i, y_i)$ and $(x_j, y_j)$ is said to be \emph{concordant} if $(x_i - x_j)(y_i - y_j) > 0$ and \emph{discordant} if the product is negative.
The tie-corrected variant $\tau_b$ \citep{kendall1945treatment}, which is the default in most software, is
\begin{equation} \label{eq:kendall}
  \tau_b = \frac{N_c - N_d}
               {\sqrt{(N_c + N_d + T_X)(N_c + N_d + T_Y)}},
\end{equation}
where $N_c$ and $N_d$ are the numbers of concordant and discordant pairs, and $T_X, T_Y$ count tied pairs on each variable.
Since $\tau_b$ has a direct probabilistic interpretation (the difference between the probability of concordance and discordance), it is often preferred for ordinal data and small samples.

\paragraph{(b) Continuous--binary pairs.}

The \emph{point-biserial correlation} measures relationship between a continuous variable $X$ and a paired binary variable $Y \in \{0, 1\}$.
It is defined as
\begin{equation} \label{eq:pointbiserial}
  r_{pb} = \frac{\bar{x}_1 - \bar{x}_0}{s_x}
           \sqrt{\frac{n_1 \, n_0}{n^2}},
\end{equation}
where $\bar{x}_1$ and $\bar{x}_0$ are the means of $X$ in the two groups, $s_x$ is the overall standard deviation, and $n_1, n_0$ are the group sizes with $n = n_1 + n_0$.
The formula is algebraically identical to applying Pearson correlation when $Y$ is coded 0/1 (see Appendix~\ref{app:pearson-biserial} for the proof).
In practice, no separate computation is needed; Pearson correlation on the raw data yields the same value.

\paragraph{(c) Binary--binary pairs.}
The \emph{phi coefficient} ($\phi$) \citep{pearson1904contingency} quantifies association between two binary variables. Let $N_{ij}$ denote the count of observations with $X = i$ and $Y = j$ in the $2 \times 2$ contingency table, for $i, j \in \{0, 1\}$, then
\begin{equation} \label{eq:phi}
  \phi = \frac{N_{11} N_{00} - N_{10} N_{01}}
              {\sqrt{(N_{11} + N_{10})(N_{01} + N_{00})(N_{11} + N_{01})(N_{10} + N_{00})}}.
\end{equation}
The numerator contrasts concordant cells ($N_{11}, N_{00}$) with discordant cells ($N_{10}, N_{01}$), and the denominator is the geometric mean of the four marginal totals.
Like point-biserial correlation, $\phi$ is algebraically equivalent to Pearson correlation when both variables are coded $0/1$.
It ranges from $-1$ to $+1$, but attaining either extreme requires equal marginal probabilities: $\phi = +1$ only when $P(X=1) = P(Y=1)$, and $\phi = -1$ only when $P(X=1) = P(Y=0)$.
When the marginal probabilities are unequal, the maximum attainable $|\phi|$ is strictly less than one and is given by
\begin{equation} \label{eq:phimax}
  |\phi|_{\max} = \sqrt{\frac{\min(p_X, p_Y)\,\min(1 - p_X,\, 1 - p_Y)}
                             {\max(p_X, p_Y)\,\max(1 - p_X,\, 1 - p_Y)}},
\end{equation}
where $p_X = P(X = 1)$ and $p_Y = P(Y = 1)$ \citep{davenport1991phi}.
This dependence on equality of marginal probabilities is the central reason why \textit{Yule's~$Q$} or \textit{tetrachoric} correlation is often preferred when the binary variables are dichotomized continuous quantities.

\emph{Yule's~$Q$} \citep{yule1912methods} is an odds-ratio-based alternative that is invariant to marginal distributions:
\begin{equation} \label{eq:yuleq}
  Q = \frac{N_{11} N_{00} - N_{10} N_{01}}{N_{11} N_{00} + N_{10} N_{01}} = \frac{OR - 1}{OR + 1},
\end{equation}
where $OR = N_{11} N_{00} / (N_{10} N_{01})$ is the odds ratio.
Unlike $\phi$, Yule's~$Q$ can reach $\pm 1$ regardless of whether $p_X$ and $p_Y$ are equal, making it preferable when the marginal probabilities differ substantially across the two variables.

\emph{Tetrachoric correlation} is a latent-variable method which assumes that the two binary outcomes arise from discretizing an underlying bivariate normal distribution at unknown thresholds \citep{hershberger2005tetrachoric}.
It estimates the Pearson correlation between the latent continuous variables via maximum likelihood.
Although no closed-form expression exists, the method uses a two-step procedure that first estimates thresholds from the marginal proportions, and then maximizes the bivariate normal likelihood over $\rho$.
It is the binary special case of polychoric correlation and is discussed in Part (e) of this section \citep{kiwanuka2022polychoric}.

\paragraph{(d) Categorical--categorical pairs.}

\emph{Cram\'{e}r's~$V$} \citep{cramer1946} normalizes the Pearson chi-square statistic to produce a measure of association for a pair of nominal variables with $r$ and $c$ levels respectively.
For a contingency table with $r$~rows and $c$~columns, the Pearson chi-square statistic is

\begin{equation} \label{eq:pearsonchisq}
  \chi^2 = \sum_{i=1}^{r} \sum_{j=1}^{c}
  \frac{(O_{ij} - E_{ij})^2}{E_{ij}},
\end{equation}

where $O_{ij}$ is the observed count in cell $(i,j)$. Under the null hypothesis that the two variables are independent, the corresponding expected count is
$E_{ij}=\frac{n_{i\cdot}n_{\cdot j}}{n}$,
where $n_{i\cdot}=\sum_{j=1}^{c}O_{ij}$ and $n_{\cdot j}=\sum_{i=1}^{r}O_{ij}$ are the row and column marginal totals, respectively.
Cram\'{e}r's~$V$ is then defined as

\begin{equation} \label{eq:cramersv}
  V = \sqrt{\frac{\chi^2}{n \cdot \min(r-1, \, c-1)}},
\end{equation}

where $n$ is the total sample size.
The coefficient ranges from $0$ (independence) to $1$ (perfect association) and is always nonnegative because nominal categories have no inherent direction.

\emph{Tschuprow's~$T$} \citep{tschuprow1939principles} is a chi-square-based measure that uses the geometric mean of the degrees of freedom as its normalizer:
\begin{equation} \label{eq:tschuprowt}
  T = \sqrt{\frac{\chi^2}{n \sqrt{(r-1)(c-1)}}}.
\end{equation}
It ranges from $0$ to $1$ but can attain unity only when the contingency table is square ($r = c$).
For nonsquare tables, the maximum value of $T$ is strictly less than one, complicating interpretation.
In modern practice, Cram\'{e}r's~$V$ is generally preferred because of this reason.

\emph{Theil's~$U$} \citep{theil1967economics}, also called the \textit{uncertainty coefficient}, applies information theory to measure how much knowing one variable reduces uncertainty about the other:
\begin{equation} \label{eq:theilsu}
  U(Y|X) = \frac{H(Y) - H(Y|X)}{H(Y)},
\end{equation}
where $H(Y) = -\sum_j p(y_j) \ln p(y_j)$ is the entropy of $Y$ and $H(Y|X)$ is the conditional entropy.
Theil's~$U$ ranges from $0$ to $1$.
Crucially, it is asymmetric: $U(Y|X)$ need not equal $U(X|Y)$.
This makes it suitable when one variable is conceptually the predictor and the other the outcome.
An analytical asymptotic standard error is available but is numerically unstable: it contains a $\log(p_{ab})$ term, where $p_{ab}=P(X=x_a, Y=y_b)$ is the joint cell probability, that diverges in sparse cells. 
\code{DescTools::UncertCoef} mitigates this with a small-cell correction (\texttt{p.zero.correction} $= 1/n^2$); \pkg{smartcor} sidesteps the problem entirely using a percentile bootstrap as follows. 
The default confidence interval is a percentile bootstrap on $U$, and the default $p$-value is obtained from a permutation test under the null hypothesis that $X$ and $Y$ are independent ($B = 500$ shuffles of $y$; recompute $U$ and count the proportion at least as large as the observed value).
Both treat $U$ as a non-negative measure for which a one-sided test against the independence null is the natural inference.

\paragraph{(e) Ordinal--ordinal pairs.}

\emph{Polychoric correlation} \citep{drasgow1986polychoric, poon1987maximum} estimates the Pearson correlation between two latent continuous variables that are assumed to be bivariate normal but observed only through ordinal categories.
The observed ordinal variable with $k$ categories is modeled as arising from $k-1$ thresholds $\tau_1, \ldots, \tau_{k-1}$ applied to the latent variable.
Both the thresholds and the latent correlation $\rho$ are estimated jointly via maximum likelihood.
No closed-form solution exists.
The key benefit is that polychoric correlation recovers the ``true'' latent association that Pearson correlation on ordinal data would systematically underestimate, with attenuation worsening as the number of categories decreases.

\emph{Goodman--Kruskal gamma} ($\gamma$) \citep{goodman1954measures} is a nonparametric measure for ordinal variables based on concordant and discordant pairs:
\begin{equation} \label{eq:gamma}
  \gamma = \frac{N_c - N_d}{N_c + N_d}.
\end{equation}
Unlike Kendall's~$\tau_b$, gamma excludes tied pairs from the denominator entirely.
This means it can overstate the strength of association when ties are common: $\gamma$ estimates $P(\text{concordant}) - P(\text{discordant})$ conditional on the pair being untied, not the unconditional difference that $\tau_b$ targets.
Kendall's~$\tau_b$ (Equation~\ref{eq:kendall}) is also widely used for ordinal--ordinal pairs, particularly when ties are prevalent and a more conservative estimate is desired.

\paragraph{(f) Mixed pairs: continuous--ordinal and binary--ordinal.}

\emph{Polyserial correlation} \citep{olsson1982polyserial} extends the latent-variable framework to pairs where one variable is continuous and the other is ordinal.
It assumes the ordinal variable arises from thresholding a latent normal variable and estimates the correlation between the observed continuous variable and the latent variable via maximum likelihood.
Like polychoric correlation, it corrects for the attenuation introduced by discretization and requires the latent normality assumption.

The \emph{rank-biserial correlation} \citep{cureton1956rank} quantifies association between a binary variable and an ordinal variable by comparing the mean ranks of the ordinal variable across the two binary groups.
It is a nonparametric method closely related to the Mann--Whitney~$U$ statistic, and it requires no distributional assumptions.

For continuous--ordinal pairs, Spearman's $r_s$ and Kendall's $\tau_b$ also apply directly by ranking the continuous variable and proceeding as in the ordinal--ordinal case.

\subsection{Equivalences: when Pearson suffices} \label{sec:equiv}

Two important equivalences hold when binary variables are coded as 0/1.

\begin{assumption}
Let $\{(x_i, y_i), i=1,...,n\}$ be a sample of $n$ pairs of observations satisfying:
(i) $n \geq 2$;
(ii) $y_i \in \{0,1\}$ with both groups non-empty, so $n_1 = \sum_i y_i$ and $n_0 = n - n_1$ both lie in $\{1, \ldots, n-1\}$, equivalently $\hat{p} = \hat{P}(Y=1)= n_1/n \in (0,1)$;
(iii) $x_i \in \mathbb{R}$ has positive sample variance, $s_x^2 > 0$.
\end{assumption}

\begin{theorem}[Pearson equals point-biserial]
Under Assumption~2.1, the Pearson sample correlation between $x$ and $y$ coincides with the point-biserial coefficient,
\[
r \;=\; \frac{\bar{x}_1 - \bar{x}_0}{s_x}\sqrt{\hat{p}(1-\hat{p})} \;=\; r_{pb},
\]
where $\bar{x}_j = n_j^{-1}\sum_{y_i=j} x_i$ for $j \in \{0,1\}$. 
\end{theorem}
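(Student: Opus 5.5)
The plan is a direct algebraic reduction of the sample Pearson formula \eqref{eq:pearson_sample} with $y$ coded $0/1$. I will rewrite its numerator and both factors of its denominator in terms of $n$, $\hat p$, the group means $\bar x_1,\bar x_0$, and $s_x$. Assumption~2.1 guarantees every quantity is well defined: $n_0,n_1\ge 1$ make $\bar x_0,\bar x_1$ exist, and $\hat p\in(0,1)$ together with $s_x>0$ keep the denominator nonzero.

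\textbf{Step 1 (numerator).} Write $\sum_i (x_i-\bar x)(y_i-\bar y)=\sum_i x_i y_i - n\bar x\bar y$. Since $y_i\in\{0,1\}$, we have $\sum_i x_i y_i = n_1\bar x_1$ and $\bar y=\hat p = n_1/n$. Hence the numerator equals $n_1(\bar x_1-\bar x)$. Next use the decomposition of the overall mean into group means, $\bar x = \hat p\,\bar x_1 + (1-\hat p)\,\bar x_0$. This gives $\bar x_1-\bar x=(1-\hat p)(\bar x_1-\bar x_0)$, so the numerator is $n\hat p(1-\hat p)(\bar x_1-\bar x_0)$.

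\textbf{Step 2 (denominator).} Because $y_i^2=y_i$, we get $\sum_i (y_i-\bar y)^2 = n_1 - n\hat p^2 = n\hat p(1-\hat p)$. For the $x$-factor, write $\sqrt{\sum_i (x_i-\bar x)^2}=\sqrt{n}\,s_x$.

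\textbf{Step 3 (combine).} Dividing gives
\[
r=\frac{n\hat p(1-\hat p)(\bar x_1-\bar x_0)}{\sqrt n\, s_x\,\sqrt{n\hat p(1-\hat p)}}=\frac{\bar x_1-\bar x_0}{s_x}\sqrt{\hat p(1-\hat p)}.
\]
Since $\hat p(1-\hat p)=n_1n_0/n^2$, this is exactly \eqref{eq:pointbiserial}, i.e.\ $r_{pb}$.

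The only real obstacle is the normalising convention for $s_x$. The identity in Step~2 requires $s_x^2=n^{-1}\sum_i(x_i-\bar x)^2$, the divisor-$n$ standard deviation. I will state this explicitly; it is the convention under which the paper's remark after \eqref{eq:pearson_sample} (the shared $n$ factor cancels) applies. I will also add a remark for the divisor-$(n-1)$ version $s_x'$. In that case the same computation gives $r=\frac{\bar x_1-\bar x_0}{s_x'}\sqrt{\frac{n_1n_0}{n(n-1)}}$, so the point-biserial formula must carry the matching factor for the equality to remain exact. Everything else is routine bookkeeping.
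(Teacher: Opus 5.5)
Your proof is correct and follows essentially the same route as the paper's. Both decompose $\bar x = \hat p\,\bar x_1 + (1-\hat p)\,\bar x_0$ to collapse the cross-product sum to $n\hat p(1-\hat p)(\bar x_1-\bar x_0)$, and both use $y_i^2=y_i$ for the binary variance; the differences are cosmetic (you write $\sum_i x_iy_i - n\bar x\bar y$ instead of splitting the covariance by group, and you work with unnormalized sums).

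Your insistence on the divisor-$n$ convention for $s_x$ is a needed repair, not a side remark. The paper's proof normalizes $S_{xy}$ and $s_y$ by $n-1$, so with the matching divisor-$(n-1)$ deviation $s_x$ one gets $S_{xy}/(s_x s_y)=\frac{\bar x_1-\bar x_0}{s_x}\sqrt{\frac{n}{n-1}\hat p(1-\hat p)}$. Its final equality silently drops that $\sqrt{n/(n-1)}$ factor, which is exactly the discrepancy your closing remark identifies.
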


The proof is in Appendix~\ref{app:pearson-biserial}.

\begin{assumption}
Let $\{(x_i, y_i), i=1,...,n\}$ be a sample of $n$ pairs of observations satisfying:
(i) $n \geq 2$;
(ii) $x_i, y_i \in \{0,1\}$;
(iii) both marginals are non-degenerate, i.e., neither row sum nor column sum of the $2\times 2$ contingency table equals $0$ or $n$.
\end{assumption}

\begin{theorem}[Pearson equals $\phi$]
Under Assumption~2.2, the Pearson sample correlation between $x$ and $y$ equals the $\phi$ coefficient,
\[
r \;=\; \frac{N_{11}N_{00} - N_{10}N_{01}}{\sqrt{N_{1\cdot}\,N_{0\cdot}\,N_{\cdot 1}\,N_{\cdot 0}}} \;=\; \phi,
\]
where $N_{ij}$ is the count of observations with $x = i$ and $y = j$, and $N_{i\cdot}$, $N_{\cdot j}$ denote row and column marginals respectively.
\end{theorem}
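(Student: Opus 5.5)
The plan is to compute the three ingredients of the sample Pearson formula \eqref{eq:pearson_sample} directly in terms of the cell counts $N_{ij}$, using the fact that $0/1$ variables satisfy $x_i^2 = x_i$, $y_i^2 = y_i$, and $x_i y_i = \mathbf{1}\{x_i = 1, y_i = 1\}$. Write $N_{1\cdot} = N_{11}+N_{10}$, $N_{0\cdot}=N_{01}+N_{00}$, $N_{\cdot 1}=N_{11}+N_{01}$, $N_{\cdot 0}=N_{10}+N_{00}$, with $n$ the sum of all four cells. Then $\sum_i x_i = N_{1\cdot}$, $\sum_i y_i = N_{\cdot 1}$, and $\sum_i x_i y_i = N_{11}$. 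The $n$ or $n-1$ normalization is irrelevant because it cancels between numerator and denominator, so I will work with raw centered sums throughout.

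For the numerator, I will use $\sum_i (x_i-\bar x)(y_i-\bar y) = \sum_i x_i y_i - n\bar x\bar y = N_{11} - N_{1\cdot}N_{\cdot 1}/n$. Multiplying by $n$ gives $nN_{11} - N_{1\cdot}N_{\cdot 1}$. Expanding $n = N_{11}+N_{10}+N_{01}+N_{00}$ and $N_{1\cdot}N_{\cdot 1} = (N_{11}+N_{10})(N_{11}+N_{01})$ then reduces this to $N_{11}N_{00} - N_{10}N_{01}$. This expansion is the only step with real algebra, and the main place an error could creep in, so I will check it by noting that the $N_{11}^2$, $N_{11}N_{10}$, and $N_{11}N_{01}$ terms cancel, leaving exactly $N_{11}N_{00} - N_{10}N_{01}$.

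For the denominator, $\sum_i (x_i-\bar x)^2 = \sum_i x_i - n\bar x^2 = N_{1\cdot} - N_{1\cdot}^2/n = N_{1\cdot}N_{0\cdot}/n$, and symmetrically $\sum_i (y_i-\bar y)^2 = N_{\cdot 1}N_{\cdot 0}/n$. The square root of their product is therefore $\sqrt{N_{1\cdot}N_{0\cdot}N_{\cdot 1}N_{\cdot 0}}/n$. Taking the ratio, the factors of $1/n$ cancel and we obtain the stated $\phi$ expression \eqref{eq:phi}. Assumption~2.2(iii) guarantees that all four marginals lie strictly between $0$ and $n$, so the denominator is nonzero and both $r$ and $\phi$ are well defined.

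As an alternative shortcut, I could invoke the preceding theorem (Pearson equals point-biserial) with $x$ itself binary. Then $\bar x_1 - \bar x_0 = N_{11}/N_{\cdot 1} - N_{10}/N_{\cdot 0}$, $s_x^2 = N_{1\cdot}N_{0\cdot}/n^2$ (with $1/n$ normalization), and $\hat p = N_{\cdot 1}/n$. Simplifying this gives the same result. However, the direct computation above is cleaner and avoids tracking the normalization convention of $s_x$, so I will present that one.
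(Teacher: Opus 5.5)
Your proposal is correct and follows essentially the same route as the paper's proof. Both use $x_i^2=x_i$, $y_i^2=y_i$ and $\sum_i x_iy_i = N_{11}$ to express the centered sums through the marginals, reduce $nN_{11}-N_{1\cdot}N_{\cdot 1}$ to $N_{11}N_{00}-N_{10}N_{01}$, and cancel the common normalizing factor; the paper carries the $1/(n-1)$ factors explicitly, whereas you work with raw centered sums, which changes nothing.
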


The proof is in Appendix~\ref{app:pearson-phi}.

Both equivalences are algebraic identities: the formulas reduce to one another when the binary variable is coded $0/1$, with no statistical assumptions invoked.
Distributional assumptions (bivariate normality, linearity) are required to interpret the resulting coefficient as a population parameter and to construct valid confidence intervals, but they do not affect the equivalence itself.
The practical implication is that no separate computation is needed for these special cases: calling \code{cor} on raw $0/1$-coded variables returns the point-biserial coefficient when paired with a continuous variable and the $\phi$ coefficient when paired with another binary variable.
The equivalence does not extend further. 
For ordinal-ordinal, ordinal-continuous, or nominal pairs, Pearson on the raw codes is no longer algebraically identical to the type-appropriate coefficient (polychoric, polyserial, Cram\'er's $V$, etc.), and the choice of method materially changes the reported value.

\subsection{Attenuation: when Pearson underestimates} \label{sec:attenuation}

When binary or ordinal variables represent discretized versions of underlying continuous variables, Pearson correlation systematically underestimates the true latent association.
This phenomenon, known as \emph{attenuation}, becomes more severe as the number of response categories decreases. This section presents Monte Carlo simulation-based comparison for several variable-type pairs. 

\paragraph{Binary-binary pairs.}
\autoref{fig:phi-tetrachoric} compares the phi coefficient (which equals Pearson for 0/1 data) with tetrachoric correlation for recovering the latent association between two dichotomized normal variables. Sweeping the true correlation across $\rho \in \{0.1, 0.2, \ldots, 0.9\}$ at $n = 500$ with $500$ replications per cell, the picture is unambiguous: the phi coefficient underestimates the latent correlation by an amount that grows with $|\rho|$---at $\rho = 0.8$ the mean estimate is only $\hat\phi \approx 0.59$ for balanced binary marginals, a downward bias of about $0.21$. Tetrachoric correlation centers on the true value with near-zero bias when the latent normality assumption holds. Thus, Tetrachoric outperforms the Phi (and Pearson) correlation coefficient

\begin{figure}[!h]
\centering
\includegraphics[width=0.85\textwidth]{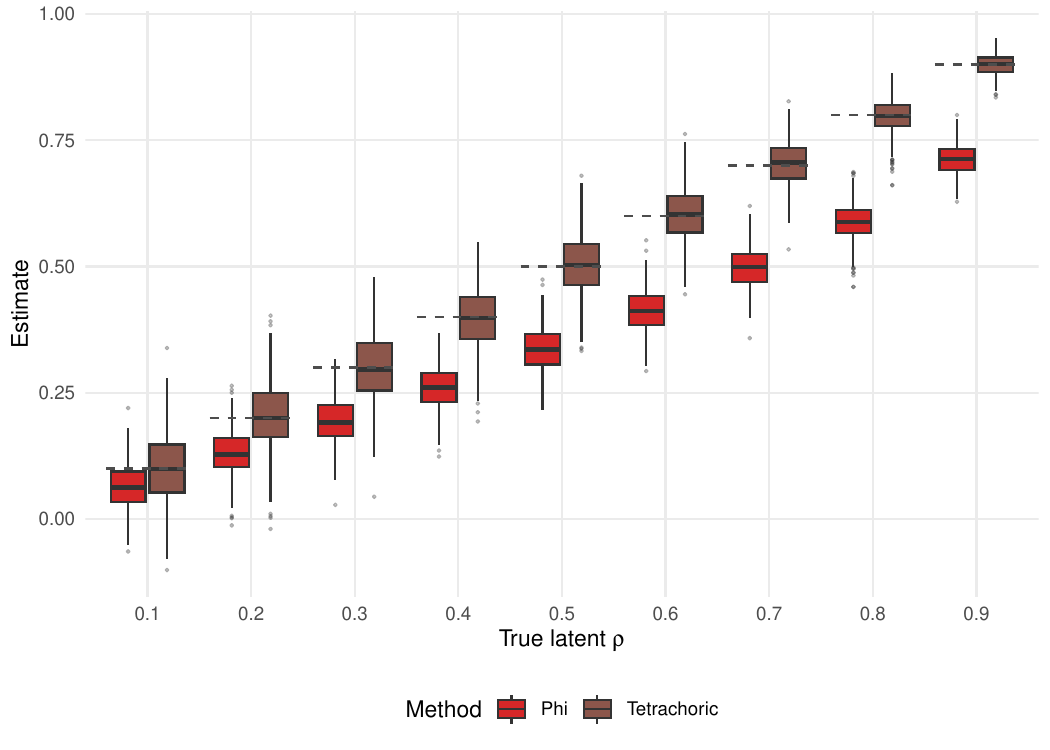}
\caption{Boxplots of estimated correlation across $500$ replications, by method (phi versus tetrachoric) and true latent $\rho$. Horizontal dashed line marks the true value at each panel; phi consistently sits below it, tetrachoric centers on it.}
\label{fig:phi-tetrachoric}
\end{figure}

\paragraph{Ordinal-ordinal pairs.}
\autoref{fig:ordinal-bias-normal} shows the bias (estimate minus true $\rho$) for four correlation methods applied to ordinal variables with 5~categories and an underlying bivariate normal latent distribution, at $n = 500$ across $500$ replications.

\begin{figure}[!h]
\centering
\includegraphics[width=0.95\textwidth]{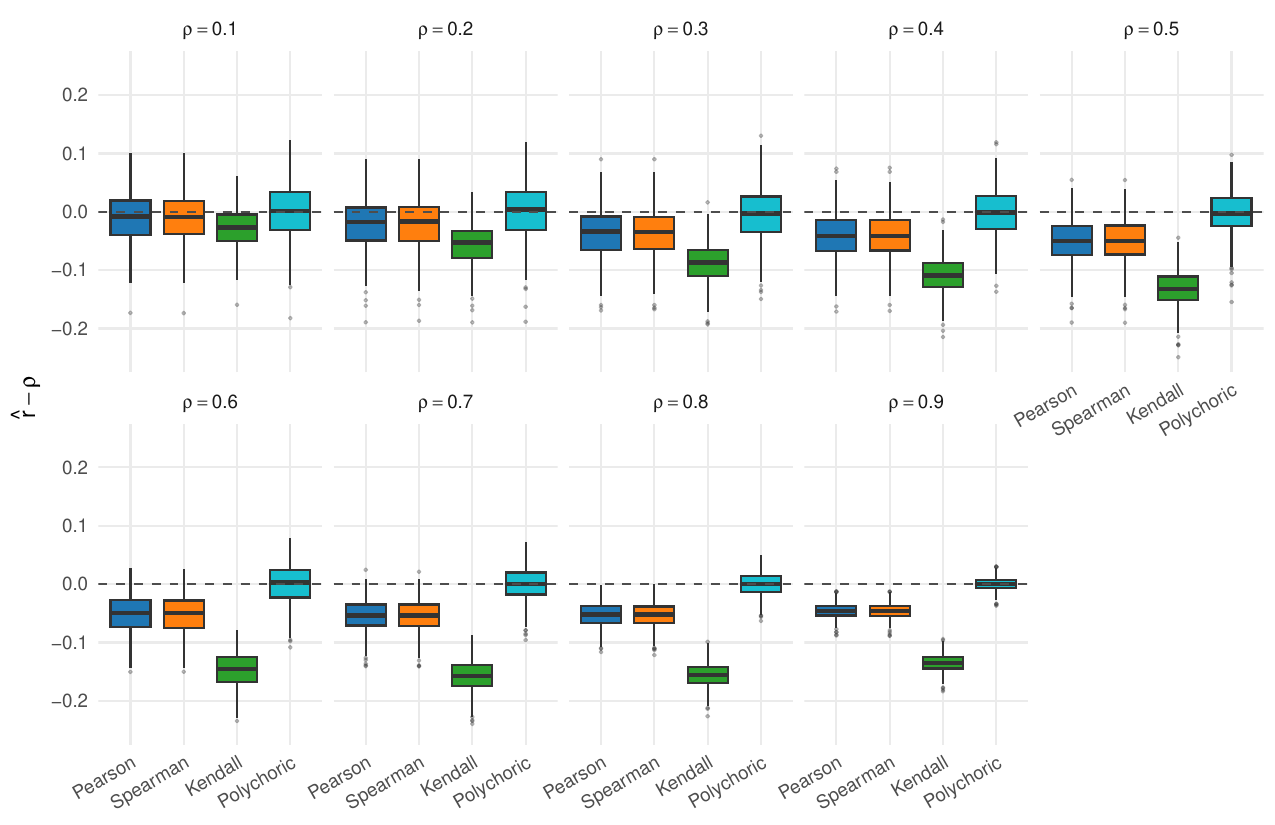}
\caption{Bias of four estimators on ordinal--ordinal data ($n = 500$, $k = 5$ Likert categories, normal latent), faceted by true $\rho$. Polychoric correlation centers on zero across the full range of $\rho$; Pearson and Spearman exhibit moderate attenuation that grows with $|\rho|$; Kendall's~$\tau$ sits well below zero, reflecting its different scale rather than distributional sensitivity.}
\label{fig:ordinal-bias-normal}
\end{figure}

Polychoric correlation is approximately unbiased across all correlation strengths; its boxes hug the zero reference line at every $\rho$.
Pearson and Spearman show moderate attenuation bias of about $-0.05$ for moderate to strong correlations, and the bias is systematic rather than noisy: the entire distribution shifts downward, not just the tails.
Kendall's~$\tau$ exhibits the largest absolute deviation due to its different scale ($\tau \neq \rho$ even in the population), so the comparison should be read as ``what does each estimator return when applied to ordinal data?'' rather than ``which estimator is closest to the true value?''.

\paragraph{Effect of number of ordinal categories.}
\autoref{fig:categories-effect} isolates how the number of Likert response options affects the polychoric estimator. Sweeping the common scale lengths $K \in \{3, 5, 7, 9\}$ at $n = 500$, polychoric correlation is unbiased at every length (its boxes hug the zero line throughout) but its \emph{precision} improves steadily as categories are added: at $\rho = 0.8$ the standard deviation of the estimate falls from $0.025$ at $K = 3$ to $0.021$ at $K = 5$ and $0.018$ at $K = 9$. The gains flatten as the scale lengthens: moving from a 3-point to a 5-point scale buys about as much precision as all further additions combined.

\begin{figure}[h]
\centering
\includegraphics[width=0.85\textwidth]{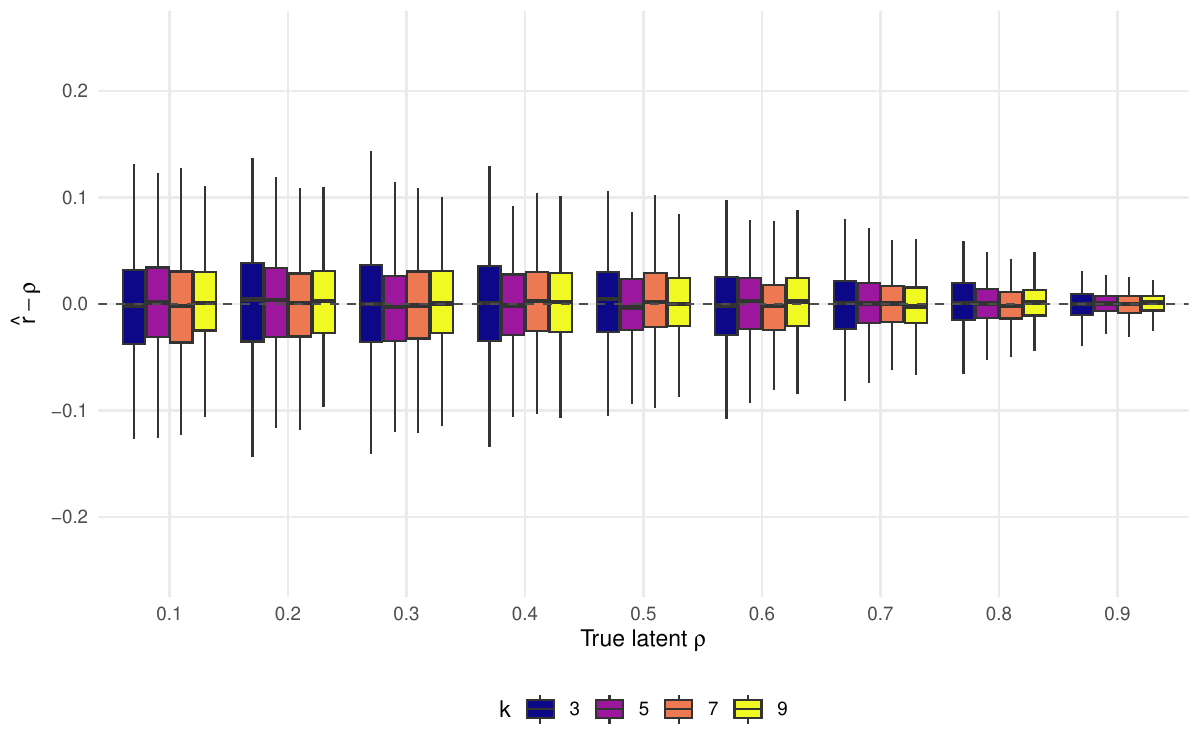}
\caption{Bias of the polychoric estimator grouped by true latent $\rho$ ($n = 500$), colored by Likert scale length $k$. Polychoric is unbiased at every $\rho$ regardless of scale length; what improves with more categories is precision, i.e. the spread of the estimate shrinks as $k$ grows, with most of the gain achieved by a 5-point scale.}
\label{fig:categories-effect}
\end{figure}

For practitioners designing instruments, the figure delivers a concrete recommendation: a 5-point Likert scale already captures much of the recoverable precision, and adding further options yields diminishing returns that may not justify the extra cognitive load on respondents. Since polychoric correlation is unbiased at every $K$, the choice of scale length matters most for the \emph{variance} of the polychoric estimate and, for analysts who fall back on Pearson or Spearman, for the magnitude of attenuation, which in the same simulation shrinks from about $-0.11$ at $K = 3$ to $-0.03$ at $K = 9$ when $\rho = 0.8$.

\paragraph{Continuous-ordinal pairs.}
\autoref{fig:cont-ordinal} shows bias for continuous--ordinal pairs, where polyserial correlation recovers the true correlation under latent normality. The pattern parallels the ordinal--ordinal case: Pearson and Spearman attenuate by similar amounts (e.g., at $\rho = 0.8$ they give $\hat\rho \approx 0.75$ and $0.76$ respectively, a downward bias of $\approx 0.04$), and polyserial centers on the true value across the full range of $\rho$.

\begin{figure}[!h]
\centering
\includegraphics[width=0.95\textwidth]{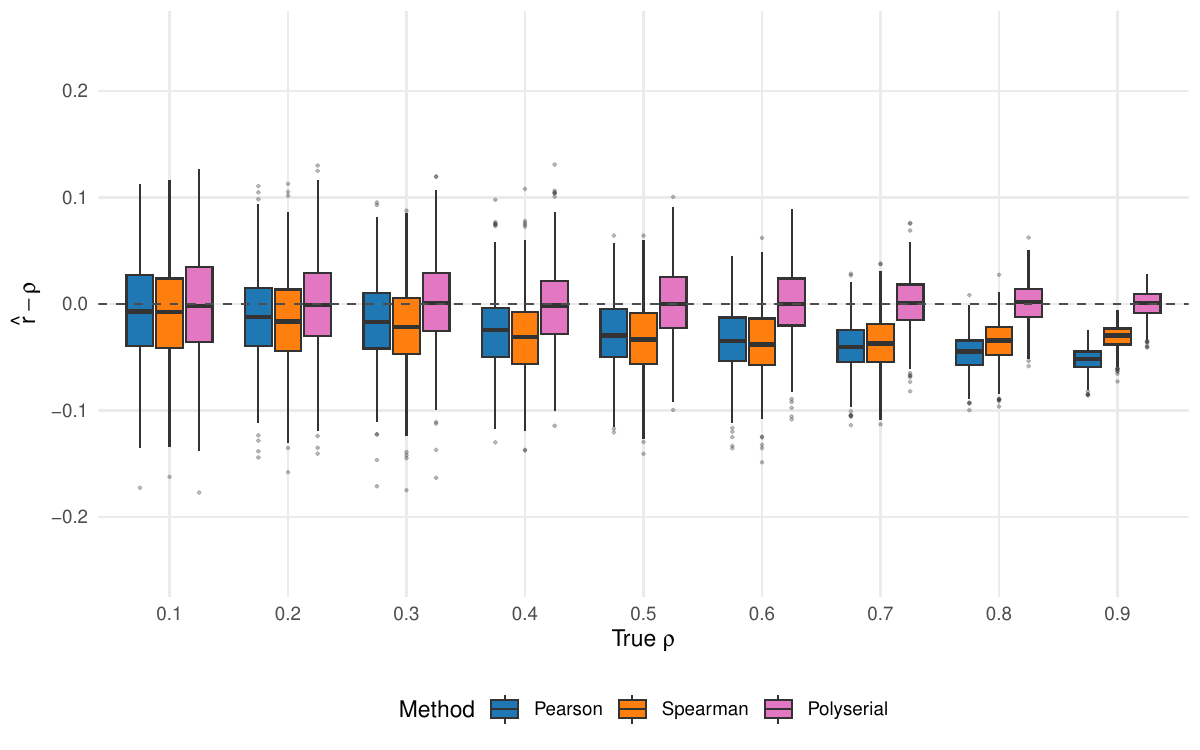}
\caption{Bias of three estimators on continuous--ordinal data ($n = 500$, $k = 5$, normal latent), grouped by true $\rho$. Polyserial recovers the latent correlation under bivariate normality; Pearson and Spearman attenuate by roughly $4$--$8\%$ of the true value.}
\label{fig:cont-ordinal}
\end{figure}

\subsection{The latent normality question} \label{sec:normality}

Polychoric correlation (for ordinal-ordinal pairs) and tetrachoric correlation (for binary-binary pairs) assume that the observed variables arise from discretizing bivariate normal latent variables.
When this assumption holds, these estimators are unbiased (as shown in \autoref{fig:phi-tetrachoric} and \autoref{fig:ordinal-bias-normal}). But what happens when the latent distribution is continuous and unimodal but non-normal?

\autoref{fig:polychoric-nonnormal} shows the bias of four methods under six different latent distributions, with the ordinal variables having five categories and the true latent correlation swept across $\rho \in \{0.1, 0.2, \ldots, 0.9\}$.
Crucially, we separate \emph{right-skewed} from \emph{left-skewed} latents (rather than lumping ``skewed'' as a single category), and we separate \emph{symmetric heavy-tailed} ($t_3$-like) from \emph{asymmetric heavy-tailed} (a $70/30$ mixture of two normals with unequal variances and a shift). 
These distinctions matter because polychoric bias is directional, not symmetric.

\begin{figure}[!h]
\centering
\includegraphics[width=\textwidth]{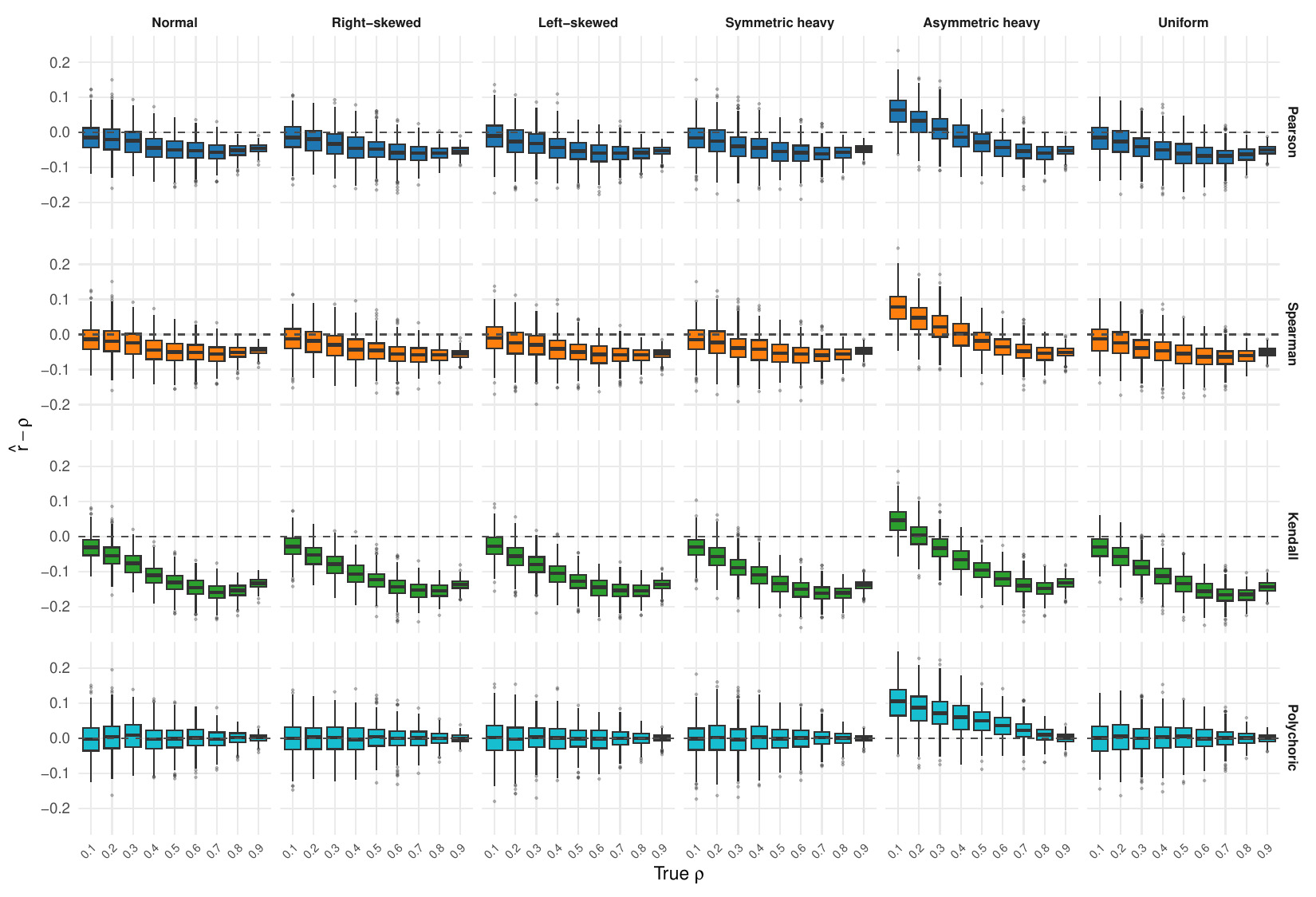}
\caption{Bias by latent distribution type ($n = 500$, $k = 5$, $\rho \in \{0.1, 0.2, \ldots, 0.9\}$, $500$ replications). Rows: methods (Pearson, Spearman, Kendall, polychoric). Columns: latent distribution; all panels share a common $y$-axis clipped at $\pm 0.25$ for readability. Skewed distributions are split into right- and left-skewed; heavy-tailed distributions into symmetric ($t_3$-like) and asymmetric (mixture). Polychoric is unbiased under normality and acquires a modest positive bias (up to $\approx +0.10$ at low $\rho$) under asymmetric heavy-tailed latents. Kendall's~$\tau$ has the most consistent shape across distributions.}
\label{fig:polychoric-nonnormal}
\end{figure}

Three patterns emerge.
First, polychoric correlation is far more robust to non-normality than the textbook caveat suggests: under right-skewed, left-skewed, symmetric heavy-tailed, and uniform latents, the median $\hat\rho_{\mathrm{poly}}$ stays within about $0.01$ of the truth across the full $\rho$ range.
Second, the exception is asymmetric-heavy latents (a 70/30 mixture of normals with unequal variances and a shift), where polychoric acquires a positive bias of up to $+0.10$ at low $\rho$ (about $+0.05$ at moderate $\rho$) while Pearson and Spearman inherit a milder version of the same upward drift at low $\rho$ (up to $+0.06$ and $+0.08$ at $\rho = 0.1$) before attenuating below the truth from $\rho \approx 0.4$ onward.
Third, Kendall's~$\tau$ shows the most consistent shape across distributions: a downward shift that grows with $\rho$, attributable to its different scale (in the population $\tau = \tfrac{2}{\pi}\arcsin\rho \neq \rho$ under normality) rather than to distributional sensitivity.
When latent normality is doubtful, Kendall's~$\tau$ is the safer choice because its behavior is the most predictable.

\noindent
\textbf{Practical implication:} When latent normality holds (e.g., the ordinal variable represents a Likert-scale measurement of an attitude), polychoric/polyserial correlation should be preferred.
When latent normality is doubtful (e.g., the ordinal variable represents inherently discrete levels like educational degrees), Kendall's~$\tau$ or Spearman's~$\rho$ are safer choices.
The \code{smart\_cor()} function exposes this choice via the \code{assume\_latent\_normal} argument.

How, then, should a practitioner assess whether the latent normality assumption is credible?
The literature offers several complementary approaches.
The foundational work of \citet{muthen1985comparison,muthen1992comparison} established that normal-theory estimators such as polychoric correlation are reasonably robust to mild departures from bivariate normality but can exhibit meaningful bias under severe departures: a pattern consistent with the simulation results in \autoref{fig:polychoric-nonnormal}.
\citet{flora2004empirical} formalized this robustness boundary empirically, showing that polychoric estimation remains largely unbiased when skewness stays below roughly $|1.0|$ and excess kurtosis below $|2.0|$; more extreme departures shift the estimator toward overestimation.

The \pkg{smartcor} package implements a formal test for bivariate normality based on \citet{joreskog2005structural}.
The \code{test\_bivariate\_normality()} function fits the polychoric model by maximum likelihood and extracts the likelihood-ratio (LR) chi-square statistic that compares the observed cell frequencies of the two ordinal variables to the frequencies expected under a bivariate normal threshold model.
Specifically, given two ordinal variables with $m_1$ and $m_2$ categories, the maximum likelihood (ML) procedure jointly estimates $m_1 - 1$ row thresholds, $m_2 - 1$ column thresholds, and the latent correlation $\rho$, for a total of $m_1 + m_2 - 1$ parameters.\footnote{The $m_1 - 1$ row thresholds and $m_2 - 1$ column thresholds account for $m_1 + m_2 - 2$ parameters, and the latent correlation $\rho$ is the remaining one, giving $(m_1 - 1) + (m_2 - 1) + 1 = m_1 + m_2 - 1$ in total.}
The contingency table has $m_1 \times m_2 - 1$ free frequencies (the total is fixed), so the test has $\mathrm{df} = m_1 m_2 - m_1 - m_2$ degrees of freedom.
A significant result (small $p$-value) indicates that the bivariate normal threshold model is a poor fit to the observed data, suggesting that polychoric correlation may be biased and a distribution-free alternative such as Kendall's~$\tau$ should be preferred.
For a $2 \times 2$ table, $\mathrm{df} = 0$ and the model is saturated and no test is possible.

The \code{assume\_latent\_normal = "auto"} option in \code{smart\_cor()} leverages this test to make a data-driven choice: it runs \code{test\_bivariate\_normality()} internally and selects polychoric correlation when the test does not reject the possibility of the scalar random variable coming from a latent normal distribution, or Kendall's~$\tau$ when the latent normality test is rejected.
Thus, \code{smartcor} shifts the latent normality decision from a subjective judgment to an automated evidence-based recommendation.

\subsection{Practical recommendations} \label{sec:recommendations}

\autoref{tab:recommendations} summarizes the recommended methods by variable-type combination and encodes the default decision logic of \code{smart_cor()}. 

\begin{table}[!h]
\centering
\caption{Recommended methods by variable-type combination, as implemented in \code{smart\_cor()}. The lower block contains association measures (not correlation; see \autoref{sec:when}).}
\label{tab:recommendations}
{\scriptsize{
\begin{tabular}{lll}
\toprule
Variable types & Default method & Alternative \\
\midrule
\multicolumn{3}{l}{\emph{Correlation methods (signed, $[-1, +1]$)}} \\
Continuous + continuous & Pearson & Spearman (suggested, especially if nonlinear) \\
Continuous + binary & Pearson = Point-biserial & --- \\
Binary + binary & Pearson = Phi & Tetrachoric (preferred, if latent normality holds) \\
Continuous + ordinal & Spearman/Kendall & Polyserial (preferred, if latent normality holds) \\
Ordinal + ordinal & Kendall's $\tau$ & Polychoric (preferred, if latent normality holds) \\
Binary + ordinal & Rank-biserial & Spearman \\
\midrule
\multicolumn{3}{l}{\emph{Association measures (unsigned, $[0, 1]$)}} \\
Continuous + categorical & Cram\'{e}r's $V$ (binned) & --- \\
Binary + categorical & Cram\'{e}r's $V$ & --- \\
Ordinal + categorical & Cram\'{e}r's $V$ & --- \\
Categorical + categorical & Cram\'{e}r's $V$ & Theil's $U$, Tschuprow's $T$ \\
\bottomrule
\end{tabular}
}}
\end{table}

The upper block lists correlation methods, applicable when both variables have at least ordinal measurement; the lower block lists association measures, applicable when at least one variable is nominal categorical.
The defaults are deliberately assumption-light: none of the recommended methods in the second column requires a latent-normality assumption. 
The latent-variable alternatives (tetrachoric, polyserial, polychoric) deliver tighter estimates when the underlying continuum is genuinely Gaussian, but the assumption is untestable from the discrete data alone in any strict sense and unwarranted assumptions can bias the resulting coefficient. 
To strike a balance, \code{smart\_cor()} performs a likelihood ratio test for bivariate latent normality \citep{joreskog2005structural} on each ordinal pair and switches to the latent-variable alternative only when the test fails to reject normality at the user-specified level (default $\alpha = 0.05$); otherwise it retains the rank-based or contingency-based default.
Binary--binary pairs are the exception: a $2 \times 2$ table is saturated by the two thresholds and the correlation, leaving no degrees of freedom to test latent normality, so \code{smart\_cor()} reports $\phi$ and reserves tetrachoric correlation for users who assert the latent-normal model explicitly (\code{assume\_latent\_normal = TRUE} or \code{method = "tetrachoric"}).
Failure to reject is of course not equivalent to accepting normality, particularly at small sample sizes, and users who can defend the latent-normal model on substantive grounds, or who prefer a different testing threshold, can override the selection through the method argument.

\subsection{Usage of correlations in existing literature} 
\label{sec:literature}

The preceding sections describe the ideal decision procedure for correlation calculations.
For a given pair of variables, an analyst determines the measurement scale of each, identifies the type-appropriate estimator from \autoref{tab:recommendations}, and (where that estimator rests on a latent-variable model) decides whether the latent-normal assumption is defensible before computing the coefficient and a corresponding confidence interval.
Consequently, it is fair to ask how often that procedure is actually carried out in published empirical work, and whether the default choices of existing software affect the results and conclusions.
To answer both questions, we conducted a study of correlation usage in the data supplements of published economics articles, using the coefficients researchers computed themselves as the benchmark and our \code{smart\_cor()} as the comparison.

Source of research papers and supplementary materials in our mini-study is the \emph{Find Economic Articles with Data} (EJD) database maintained by \citet{kranz2023ejd}, which indexes articles from the American Economic Association journals, the \emph{Review of Economic Studies}, the \emph{Review of Economics and Statistics}, and smaller samples from the \emph{Quarterly Journal of Economics}, the \emph{Journal of Political Economy}, \emph{Econometrica}, and others, together with their deposited replication supplements.
\autoref{tab:ejd-funnel} traces the corpus from that index to the set of coefficients we analyze.
Of the $11{,}889$ indexed articles, $9{,}312$ provide a data supplement.
We specially look at the subset of $1{,}101$ studies whose supplements are  retrievable through an open programmatic interface (Harvard Dataverse and Zenodo); the remainder are hosted behind logins (openICPSR) or on legacy links that now redirect to them.
Of these, $1{,}077$ were successfully retrieved and screened, $780$ contained analysis code, and $197$ ($25.3\%$ of those with code) contained at least one correlation command.
Each of the $197$ was examined in full: every correlation call was traced to the data in the supplementary files and, whenever possible, correlation values were recomputed to reproduce the paper's outputs.
Eighty papers' correlations were recomputable based on this criterion; the $117$ that were not divide into $76$ resting on restricted or proprietary microdata (administrative registers, confidential vendor panels, etc.), $25$ whose only correlations are internal to simulations, $6$ that occur in bundled third-party library code, and $10$ with other technical reasons.
The recomputation yielded $3{,}364$ distinct variable pairs, of which $3{,}322$ have at least $15$ complete observations and form the analysis set; pairs below that size are excluded because type detection is unreliable when a continuous variable happens to take few distinct values (\autoref{sec:detect}).

\begin{table}[h]
\centering
\caption{From the EJD index to the analyzed coefficients. Each row is a subset of the one above it.}
\label{tab:ejd-funnel}
\begin{tabular}{lr}
\toprule
Stage & Count \\
\midrule
Articles indexed in EJD & $11{,}889$ \\
\quad with a data supplement & $9{,}312$ \\
\quad supplement anonymously retrievable (sampling frame) & $1{,}101$ \\
\quad retrieved and screened & $1{,}077$ \\
\quad containing analysis code & $780$ \\
\quad issuing $\geq 1$ correlation command & $197$ \\
\midrule
Papers recomputable (data shipped, light preparation) & $80$ \\
Papers contributing recomputed pairs & $79$ \\
Distinct variable pairs recomputed & $3{,}364$ \\
Pairs analyzed ($n \geq 15$) & $3{,}322$ \\
\bottomrule
\end{tabular}
\end{table}

The first finding concerns the method itself, before any question of whether it is the right one.
Across the $197$ papers, $1{,}124$ of the $1{,}218$ correlation calls ($92.3\%$) are Pearson (identified using the Stata \code{correlate}/\code{pwcorr}, \proglang{R} \fct{cor}, \proglang{MATLAB} \fct{corrcoef}, and \proglang{Python} defaults) with Spearman ($63$), Kendall ($17$), partial Pearson ($10$), and polychoric ($4$) making up the remainder.
At the paper level, $174$ of the $197$ ($88\%$) use Pearson and nothing else.
This is unconditional on the variables being correlated: the default is applied whether the pair consists of two continuous variables, a continuous variable and a binary indicator, or two ordinal Likert items.

For each recomputed pair, we computed the authors' own coefficient in \proglang{R}, reproducing their published value where a public version of the paper allowed the check, so that the comparison is like--for-like—alongside the coefficient \code{smart\_cor()}, which selects on the identical observations.
Of the $3{,}322$ analyzed pairs, $2{,}381$ ($71.7\%$) fall in what we term the Pearson-equivalent class: the estimator \code{smart\_cor()} selects is either Pearson itself or one of its two algebraic identities, the point-biserial coefficient for a continuous--binary pair or $\phi$ for a binary--binary pair (\autoref{sec:equiv}), which return the same number.
The remaining $941$ instances diverge.
This high equivalence rate should not be read as evidence that Pearson is broadly appropriate.
Of the ten variable-type pairings enumerated in \autoref{tab:recommendations}, Pearson (or one of its two identities) is the type-appropriate estimator in only three: continuous--continuous, continuous--binary, and binary--binary.
The equivalence rate is high because the corpus is compositionally dominated by exactly those three pairings: economists overwhelmingly correlate continuous magnitudes and binary indicators, for which Pearson is often the near-best tool.
Indeed, not a single variable in the corpus enters a correlation as a nominal categorical with its $k$ levels intact: of the $761$ distinct variables behind the analyzed pairs, type detection classifies $506$ as continuous, $121$ as binary, $79$ as ordinal, and $55$ as count, and none as categorical---where group membership matters, it appears only as already-constructed binary indicator variables.
The decomposition by pairing makes the point sharply.
Among the $2{,}590$ analyzed pairs in which no variable is ordinal, the method diverges in $8.1\%$ of cases---the residual arising almost entirely where authors deliberately chose a rank method on continuous data.
Among the $732$ pairs in which at least one variable is ordinal, the method diverges in \emph{every single case} ($732$ of $732$, $100\%$).
The choice of correlation method is therefore not a matter that ``usually does not matter''; it does not matter precisely where the variables are continuous, and it almost always matters the moment an ordinal variable enters.

In the three pairings where the coefficient coincides with Pearson, the coincidence is exact and provable (\autoref{app:proofs}), but the literature does not arrive at it by the reasoning that would establish it.
Recognizing that a continuous--binary or binary--binary pairing makes Pearson the correct estimator requires first classifying each variable's measurement scale and identifying the pairing; only then does one know that the point-biserial or $\phi$ formula (and hence the generic Pearson call) returns the type-appropriate coefficient.
In the corpus, that coefficient is instead reached generically: a single \code{correlate} or \code{pwcorr} statement in a software package is applied across a block of variables, without a preceding step that establishes the pairing.
The resulting number may be correct, but its correctness is incidental rather than demonstrated, and is applied unchanged to the seven pairings for which it is not the type-appropriate choice.
This is not to characterize the generic call as an error where it happens to coincide; it is to observe that the reasoning distinguishing the three coincident pairings from the seven non-coincident ones is, in practice, rarely performed.

The value of automating the choice of correlation metric is clearest when the manual alternative is spelled out.
To select the type-appropriate coefficient for a single pair, an analyst would, in principle: 
(i) determine each variable's measurement scale, which is not always evident from the stored values---a $0/1$ column may encode a genuine dichotomy or a coarsened continuum, and a $1$--$5$ column may be a Likert attitude with a plausible latent continuum or a set of intrinsically discrete categories; 
(ii) look up the estimator appropriate to the resulting pairing; 
(iii) for any pairing whose recommended estimator is a latent-variable coefficient (tetrachoric, polyserial, polychoric), decide whether the latent-normal model is defensible, ideally through a formal test rather than an assumption (\autoref{sec:test-normality}), and fall back to a rank-based estimator when it is not; and 
(iv) compute a confidence interval derived for the chosen estimator rather than the Pearson interval.
Carried out by hand for every cell of every correlation table, across every variable pairing a paper reports, this is a substantial and repetitive cost, and a single trusted default is an understandable economy even if it may be wrong.

\code{smart\_cor()} performs all four steps automatically for each pair, at no marginal cost to the analyst: it detects the measurement type of each variable, selects the estimator by pairing, runs the latent-normality test where the pairing calls for one and switches estimators on its outcome, and returns an interval matched to the method.

Where the method diverges, the magnitude of the change ranges from negligible to consequential.
Among the $941$ divergent pairs the median absolute change in the coefficient is small, $|\Delta r| \approx 0.017$, but the distribution has a substantial tail, reaching $0.47$, and in $36$ pairs the divergence moves the coefficient across the conventional $p < 0.05$ threshold, so relationships that were rendered insignificant with default correlation methods in existing software packages, become significant with \code{smart\_cor}'s statistically-accurate choices.
The direction is not uniform, and this is worth stating plainly: in the ordinal pairings \code{smart\_cor()}'s latent-variable estimators recover more association than Pearson, consistent with the attenuation documented in \autoref{sec:attenuation}, whereas in the $209$ divergent continuous-data pairs the authors had chosen a rank method (typically Spearman on skewed magnitudes) and \code{smart\_cor()} defaults to Pearson where the package is the more conservative choice, and the difference reflects a genuine distinction between linear and monotone association rather than an error on either side.

\section[The smartcor package]{The \pkg{smartcor} package} \label{sec:package}

\subsection{Design philosophy} \label{sec:design}

The \pkg{smartcor} package is built around four principles:

\begin{enumerate}
  \item \textbf{Smart defaults.} The package ``just works'' for the common
    case: pass two variables or a data frame, and the appropriate methods
    are selected automatically.
  \item \textbf{Transparency.} Every decision is explained. Each result
    includes a \code{rationale} field describing why a particular method was
    chosen, and verbose mode prints the detection and selection process.
  \item \textbf{User override.} Users can force any method or type assignment
    via the \code{method}, \code{x\_type}, and \code{y\_type} arguments.
  \item \textbf{Tidy integration.} All output objects have \code{tidy()} methods
    (via \pkg{broom}; \citealp{R-broom}) returning tibbles, and
    \code{smart\_cormat()} produces matrices compatible with \pkg{ggplot2}
    \citep{R-ggplot2} visualization.
\end{enumerate}

The package can be installed from CRAN:

\begin{CodeInput}
R> install.packages("smartcor")
R> library("smartcor")
\end{CodeInput}

\noindent
The core dependencies are \pkg{cli} for formatted console output, \pkg{rlang}, \pkg{tibble}, and \pkg{generics} for tidyverse-style data handling, \pkg{polycor} \citep{R-polycor} for polychoric, polyserial, and tetrachoric correlations, and \pkg{ggplot2} \citep{R-ggplot2} for the heatmap visualizations of \autoref{sec:viz}.

\subsection{Variable type detection} \label{sec:detect}

The \code{detect\_type()} function classifies a variable into one of five types (continuous, binary, ordinal, count, or categorical) using the following rules, applied in order:

\begin{enumerate}
  \item Ordered factors $\rightarrow$ \code{"ordinal"}.
  \item Unordered factors or character vectors with exactly 2 unique values
    $\rightarrow$ \code{"binary"}; with 3 or more $\rightarrow$
    \code{"categorical"}.
  \item Logical vectors $\rightarrow$ \code{"binary"}.
  \item Numeric vectors with exactly 2 unique values $\rightarrow$
    \code{"binary"}.
  \item Numeric vectors with $\leq$~\code{ordinal\_threshold} (default: 10)
    unique values $\rightarrow$ \code{"ordinal"}.
  \item Numeric vectors that are non-negative integer-valued and have more
    than \code{ordinal\_threshold} unique values $\rightarrow$ \code{"count"}
    (disable with \code{detect\_count = FALSE}).
  \item All other numeric vectors $\rightarrow$ \code{"continuous"}.
\end{enumerate}

\noindent
Rule~6 is a syntactic heuristic: any non-negative, integer-valued vector with more than \code{ordinal\_threshold} distinct values is labeled \code{"count"}, including integer-recorded \emph{measurements} such as age in years or engine horsepower that do not, strictly, arise from counting.
Since \pkg{smartcor} treats count variables identically to continuous variables for every correlation, this labeling never changes the selected method or the estimate; it affects only the descriptive type reported.
Set \code{detect\_count = FALSE} to classify such variables as \code{"continuous"}.

\begin{CodeInput}
R> detect_type(rnorm(100))
\end{CodeInput}
\begin{CodeOutput}
[1] "continuous"
\end{CodeOutput}

\begin{CodeInput}
R> detect_type(c(0, 1, 0, 1, 1, 0))
\end{CodeInput}
\begin{CodeOutput}
[1] "binary"
\end{CodeOutput}

\begin{CodeInput}
R> detect_type(c(1, 2, 3, 4, 5, 3, 2, 4))
\end{CodeInput}
\begin{CodeOutput}
[1] "ordinal"
\end{CodeOutput}

\begin{CodeInput}
R> detect_type(c("red", "blue", "green"))
\end{CodeInput}
\begin{CodeOutput}
[1] "categorical"
\end{CodeOutput}

\noindent
Full argument documentation for \code{detect\_type()}, and for every function introduced below, is available in the package help pages.

\subsection[Core function: smart\_cor()]{Core function: \code{smart\_cor()}} \label{sec:smartcor}

The primary function \code{smart\_cor(x, y)} computes the best-suited correlation for two variables.
It detects types, selects a method, and returns a richly annotated result.

\begin{CodeInput}
R> result <- smart_cor(mtcars$mpg, mtcars$wt, verbose = FALSE)
R> result
\end{CodeInput}
\begin{CodeOutput}
-- Smart Correlation --
Estimate: -0.8677
Method: Pearson Correlation
Variables: mtcars$mpg ("continuous") x mtcars$wt ("continuous")
N: 32
p-value: < 0.001
Test: t-test on r (cor.test)
H0: rho = 0
Small p-values (e.g., p < 0.05) indicate evidence against H0.
95\% CI: [-0.9338, -0.7441]
Source: Fisher z (cor.test)

Both variables are continuous; Pearson correlation selected.

i Alternatives: "spearman" and "kendall" (pass `method = "..."` to use)
\end{CodeOutput}

\noindent
The result object is a list of class \code{"smartcor"} with elements covering identifiers (\code{x\_name}, \code{y\_name}, \code{x\_type}, \code{y\_type}), the point estimate (\code{estimate}, \code{method}, \code{method\_label}), inference (\code{statistic}, \code{p.value}, \code{p\_method}, \code{null\_hypothesis}, \code{p\_interpretation}, \code{ci\_lower}, \code{ci\_upper}, \code{conf\_level}, \code{ci\_method}, \code{ci\_source}), the rationale text, and sample sizes.
The \code{tidy()} method returns a one-row tibble with twenty columns; transposing it with \code{t()} makes every column readable in print:

\begin{CodeInput}
R> library("broom")
R> t(tidy(result))
\end{CodeInput}
\begin{CodeOutput}
                 [,1]
var_x            "mtcars$mpg"
var_y            "mtcars$wt"
x_type           "continuous"
y_type           "continuous"
estimate         "-0.8676594"
method           "pearson"
method_label     "Pearson Correlation"
rationale        "Both variables are continuous; Pearson correlation selected."
statistic        "-9.559044"
p.value          "1.293959e-10"
p_method         "t-test on r (cor.test)"
null_hypothesis  "rho = 0"
p_interpretation "Small p-values (e.g., p < 0.05) indicate evidence against H0."
ci_lower         "-0.9338264"
ci_upper         "-0.7440872"
conf_level       "0.95"
ci_method        "analytic"
ci_source        "Fisher z (cor.test)"
n                "32"
n_complete       "32"
\end{CodeOutput}

\noindent
For a pair involving different types, \code{smart\_cor()} selects the appropriate specialized method.
The default \code{assume_latent_normal = "auto"} runs the bivariate normality test only when at least one data type is ordinal to test for latent normality, for instance, in ordinal-ordinal pair one needs to choose between Kendall and Polychoric correlations.
For ordinal--ordinal pairs, the default \code{"auto"} mode runs the bivariate normality test to decide between polychoric and Kendall.
Users can override with \code{assume\_latent\_normal = FALSE}:

\begin{CodeInput}
R> smart_cor(mtcars$gear, mtcars$carb,
+    assume_latent_normal = FALSE, verbose = FALSE)
\end{CodeInput}
\begin{CodeOutput}
-- Smart Correlation --
Estimate: 0.0980
Method: Kendall's Tau-b
Variables: mtcars$gear ("ordinal") x mtcars$carb ("ordinal")
N: 32
p-value: 0.5302
Test: asymptotic normal (cor.test, exact = FALSE)
H0: tau = 0
Small p-values (e.g., p < 0.05) indicate evidence against H0.
95% CI: [-0.1455, 0.3303]
Source: Fieller, Hartley, and Pearson (1957)

Both variables are ordinal. Kendall's tau selected (no latent normality
assumption; stable bias across distributions).

i Alternatives: "polychoric", "spearman", and "gamma"
  (pass `method = "..."` to use)
\end{CodeOutput}

\noindent
The default behavior of \code{smart\_cor()} can be summarized as follows.
Variable types are auto-detected via \code{detect\_type()} unless the user supplies \code{x\_type} or \code{y\_type}.
For ordinal--ordinal and ordinal--continuous pairs, the latent normality assumption is tested automatically (\code{assume\_latent\_normal = "auto"}) using a likelihood-ratio chi-square test (\autoref{sec:test-normality}); if the test rejects, a distribution-free method is selected instead.
For binary--binary pairs the $2 \times 2$ table is saturated and cannot support this test, so the default is the $\phi$ coefficient; tetrachoric correlation is available through \code{assume\_latent\_normal = TRUE} or \code{method = "tetrachoric"}.
Confidence intervals and $p$-values are computed analytically wherever a closed-form or asymptotic source is available; this is the default for every method except Theil's $U$, whose CI is obtained by a percentile bootstrap and $p$-value is obtained using a permutation test against $H_0: X \perp Y$ (both $B = 500$, set via \code{n\_boot} and \code{n\_perm}).
\autoref{tab:inference-sources} documents the estimate, $p$-value, and confidence-interval source used for each correlation method, with citations.
The \code{bootstrap = "auto"} (default) ensures that bootstrapping is performed only when an analytic confidence interval is unavailable or fails to compute (e.g., singular Hessian, sparse $2 \times 2$ table); setting \code{bootstrap = TRUE} forces bootstrap intervals for all methods, useful when cross-checking analytic intervals.
The ordinary paired bootstrap is centred at the observed estimate rather than the null, so it is not used to construct $p$-values; analytic tests remain in place, and Theil's~$U$ retains its permutation test.
Data pairs with at least one missing value are removed by \code{ignore\_na = TRUE}.

\paragraph{Sanity check on simulated data.}
A two-line sanity check makes the behaviour easier to internalise. Simulating two correlated normals with $\rho = 0.7$, \code{smart\_cor()} reports a Pearson estimate near the truth with an analytic Fisher-$z$ confidence interval. Replacing the variables with their ranks reproduces the well-known identity that Pearson on ranks equals Spearman on the originals.

\begin{CodeInput}
R> set.seed(7)
R> xy <- MASS::mvrnorm(500, mu = c(0, 0),
+                       Sigma = matrix(c(1, 0.7, 0.7, 1), 2, 2))
R> x <- xy[, 1]; y <- xy[, 2]
R> smart_cor(x, y, verbose = FALSE)
\end{CodeInput}
\begin{CodeOutput}
-- Smart Correlation --
Estimate: 0.7167
Method: Pearson Correlation
Variables: x ("continuous") x y ("continuous")
N: 500
p-value: < 0.001
Test: t-test on r (cor.test)
H0: rho = 0
Small p-values (e.g., p < 0.05) indicate evidence against H0.
95% CI: [0.6712, 0.7568]
Source: Fisher z (cor.test)

Both variables are continuous; Pearson correlation selected.

i Alternatives: "spearman" and "kendall" (pass `method = "..."` to use)
\end{CodeOutput}

\noindent
The same call on the rank-transformed pair returns the Spearman of the original pair exactly:

\begin{CodeInput}
R> all.equal(cor(rank(x), rank(y)),
+             cor(x, y, method = "spearman"))
\end{CodeInput}
\begin{CodeOutput}
[1] TRUE
\end{CodeOutput}

\noindent
\code{rank(x)} and \code{rank(y)} are integer-valued with many unique values, so \code{detect\_type()} flags them as counts; under the smartcor convention that counts are treated as continuous numerics, the function returns Pearson correlation, whose value coincides with the Spearman of the originals by construction, which is a useful end-to-end sanity check before running the same pipeline on real data.

\begin{table}[!h]
\centering
\caption{Source of the point estimate, $p$-value, and confidence interval for each correlation method in \pkg{smartcor}. Cells marked $^{\ast}$ are computed natively by \pkg{smartcor} rather than delegated to base \proglang{R} or an external package; for the tetrachoric, polychoric, and polyserial methods, the Wald $p$-value and Fisher-$z$ interval are constructed by \pkg{smartcor} from the maximum-likelihood standard error returned by \pkg{polycor}. Bootstrap confidence intervals are available via \code{bootstrap = TRUE} and serve as automatic fallback when an analytic interval fails.}
\label{tab:inference-sources}
\scriptsize
\begin{tabular}{p{1.9cm}p{4.4cm}p{2.5cm}p{4.5cm}}
\toprule
Correlation & Estimate & $p$-value & Confidence interval \\
\midrule
Pearson          & \code{stats::cor.test} (default Pearson) & \code{cor.test} (analytic, $t$) & \code{cor.test} (Fisher $z$) \\
Spearman         & \code{stats::cor.test( method="spearman", exact=FALSE)} & \code{cor.test} (asymptotic AS~89) & \citet{bonett2000sample}, Fisher $z$, $\mathrm{SE}_z = \sqrt{(1+r^2/2)/(n-3)}$$^{\ast}$ \\
Kendall          & \code{stats::cor.test( method="kendall", exact=FALSE)} & \code{cor.test} (asymptotic normal) & \citet{fieller1957tests}, Fisher $z$, $\mathrm{SE}_z = \sqrt{0.437/(n-4)}$$^{\ast}$ \\
Point-Biserial   & \code{stats::cor.test} on continuous and 0/1-coded binary & \code{cor.test} (analytic, $t$) & \code{cor.test} (Fisher $z$) \\
Phi              & $2 \times 2$ contingency formula$^{\ast}$ & \code{chisq.test} (\code{correct=FALSE}) & Noncentral $\chi^2$ pivot (signed)$^{\ast}$ \\
Tetrachoric      & \code{polycor::polychor(ML=TRUE, std.err=TRUE)} on the $2 \times 2$ table & Wald, Fisher $z$$^{\ast}$ & \citet{olsson1979maximum} and \citet{joreskog1994estimation}, Fisher $z$$^{\ast}$ \\
Polychoric       & \code{polycor::polychor(ML=TRUE, std.err=TRUE)} & Wald, Fisher $z$$^{\ast}$ & \citet{olsson1979maximum} and \citet{joreskog1994estimation}, Fisher $z$$^{\ast}$ \\
Polyserial       & \code{polycor::polyserial(ML=TRUE, std.err=TRUE)} & Wald, Fisher $z$$^{\ast}$ & \citet{olsson1982polyserial}, Fisher $z$$^{\ast}$ \\
Cram\'{e}r's $V$ & $\sqrt{\chi^2/(n(k-1))}$ from \code{chisq.test}$^{\ast}$ & \code{chisq.test} (analytic) & Noncentral $\chi^2$ pivot$^{\ast}$ \\
Tschuprow's $T$  & $\sqrt{\chi^2/(n\sqrt{(r-1)(c-1)})}$ from \code{chisq.test}$^{\ast}$ & \code{chisq.test} (analytic) & Noncentral $\chi^2$ pivot (rescaled $V$)$^{\ast}$ \\
Yule's $Q$       & $(ad - bc)/(ad + bc)$ on $2 \times 2$$^{\ast}$ & Wald, Fisher $z$$^{\ast}$ & \citet{brown1977sampling} (gamma on $2 \times 2$)$^{\ast}$ \\
Goodman--Kruskal $\gamma$ & $(P - Q)/(P + Q)$ from concordant/discordant pairs$^{\ast}$ & Wald, Fisher $z$$^{\ast}$ & \citet{brown1977sampling} (closed-form SE)$^{\ast}$ \\
Rank-biserial    & Glass formula $2(\bar{r}_1 - \bar{r}_0)/n$$^{\ast}$ & Wald, Fisher $z$$^{\ast}$ & \citet{cliff1996ordinal} asymptotic (Cliff's $\delta$)$^{\ast}$ \\
Theil's $U$      & Entropy-based $U(Y|X)$$^{\ast}$ & Permutation, $B = 500$$^{\ast}$ & Bootstrap, $B = 500$$^{\ast}$ \\
\bottomrule
\end{tabular}
\end{table}

\subsection[Testing latent normality: test\_bivariate\_normality()]{Testing latent normality: \code{test\_bivariate\_normality()}} \label{sec:test-normality}

As discussed in \autoref{sec:normality}, polychoric and tetrachoric correlations assume that the observed ordinal or binary variables arise from an underlying bivariate normal distribution.
Following the approach of \citet{joreskog2005structural}, our \code{test\_bivariate\_normality()} function provides a formal likelihood-ratio test of the latent normality assumption.
The implementation fits the polychoric model by maximum likelihood using \code{polycor::polychor(tab, ML = TRUE, std.err = TRUE)}, which jointly estimates row thresholds, column thresholds, and the latent correlation $\rho$.
It then extracts the likelihood-ratio chi-square statistic comparing the observed cell frequencies to those expected under the fitted bivariate normal threshold model (see \autoref{sec:normality} for the degrees of freedom).
A small $p$-value indicates that the bivariate normal model is a poor fit, suggesting that polychoric correlation may be biased and Kendall's~$\tau$ should be preferred:

\begin{CodeInput}
R> set.seed(42)
R> z <- MASS::mvrnorm(500, mu = c(0, 0),
+    Sigma = matrix(c(1, 0.5, 0.5, 1), 2, 2))
R> x <- cut(z[, 1], breaks = qnorm(c(0, 0.2, 0.4, 0.6, 0.8, 1)),
+    labels = FALSE)
R> y <- cut(z[, 2], breaks = qnorm(c(0, 0.2, 0.4, 0.6, 0.8, 1)),
+    labels = FALSE)
R> test_bivariate_normality(x, y)
\end{CodeInput}
\begin{CodeOutput}
-- Bivariate Normality Test (LR Chi-Square) --

Table dimensions: 5 x 5
N = 500
Polychoric rho: 0.4399
LR chi-square: 22.2181 (df = 15, p = 0.1022)
Conclusion: no evidence against bivariate normality at alpha = 0.05. Polychoric
correlation is appropriate.
\end{CodeOutput}

\noindent
Because \code{assume\_latent\_normal = "auto"} is the default, this test is integrated directly into the standard \code{smart\_cor()} workflow.
The output includes the LR test result alongside the correlation estimate:

\begin{CodeInput}
R> smart_cor(x, y, verbose = FALSE)
\end{CodeInput}
\begin{CodeOutput}
-- Smart Correlation --
Estimate: 0.4406
Method: Polychoric Correlation
Variables: x ("ordinal") x y ("ordinal")
N: 500
p-value: < 0.001
Test: Wald test on Fisher-z scale
H0: rho = 0
Small p-values (e.g., p < 0.05) indicate evidence against H0.
95% CI: [0.3565, 0.5176]
Source: Olsson (1979) and Joreskog (1994)

Latent normality: LR chi-sq = 22.22, df = 15, p = 0.102 (not rejected)
Test: LR chi-square test of latent bivariate normality
H0: (x, y) ~ bivariate normal
Small p-values indicate evidence against latent bivariate normality.

Both variables are ordinal. Polychoric correlation selected under the
assumption of latent bivariate normality.

i Alternatives: "kendall", "spearman", and "gamma"
  (pass `method = "..."` to use)
\end{CodeOutput}

\noindent
The two test results sit side by side. The first $p$-value ($p < 0.001$) tests the polychoric estimate itself against the independence null $H_0$:~$\rho = 0$, using a Wald statistic on the Fisher-$z$ scale: $z = \mathrm{atanh}(\hat\rho) / (\widehat{\mathrm{SE}}(\hat\rho) / (1 - \hat\rho^2))$. The second test (latent normality, $p = 0.102$) compares the polychoric model against the saturated multinomial via likelihood ratio; large values reject the bivariate-normal threshold model. Because that test did \emph{not} reject ($p = 0.102$), \code{smart\_cor()} retained polychoric correlation; had it rejected, the function would have defaulted to Kendall's~$\tau_b$. The confidence interval uses the Olsson--J\"oreskog standard error of $\hat\rho$ on the Fisher~$z$ scale (\autoref{tab:inference-sources}); bootstrap inference is no longer needed but remains available via \code{bootstrap = TRUE} as a sanity check.

\subsection[Correlation matrix: smart\_cormat()]{Correlation matrix: \code{smart\_cormat()}} \label{sec:cormat}

For data frames with mixed variable types, \code{smart\_cormat()} computes pairwise correlations using the appropriate method for each pair. We illustrate on the entire \code{mtcars} data frame with all eleven columns and no custom changes. Auto-detection (the central premise of \pkg{smartcor}) handles the type assignment internally:

\begin{CodeInput}
R> mat <- smart_cormat(mtcars, assume_latent_normal = FALSE,
+                      verbose = FALSE)
R> mat$types
\end{CodeInput}
\begin{CodeOutput}
         mpg          cyl         disp           hp         drat
"continuous"    "ordinal" "continuous"      "count" "continuous"
          wt         qsec           vs           am         gear
"continuous" "continuous"     "binary"     "binary"    "ordinal"
        carb
   "ordinal"
\end{CodeOutput}

\noindent
\code{detect\_type()} correctly recognizes \code{cyl}, \code{gear}, and \code{carb} as ordinal (small numeric ranges), \code{vs} and \code{am} as binary (two unique values), and \code{hp} as a count (non-negative integers exceeding \code{ordinal\_threshold}); the remaining five columns are continuous. Counts are treated as numeric continuous variables for purposes of method selection, so a count $\times$ continuous pair routes to Pearson and a count $\times$ ordinal pair routes through the same polyserial-or-Spearman logic as continuous $\times$ ordinal. The correlation matrix follows directly:

\begin{CodeInput}
R> signif(mat$correlations, 2)
\end{CodeInput}
\begin{CodeOutput}
       mpg   cyl  disp    hp   drat    wt   qsec    vs     am   gear   carb
mpg   1.00 -0.91 -0.85 -0.78  0.680 -0.87  0.420  0.66  0.600  0.540 -0.660
cyl  -0.91  1.00  0.93  0.90 -0.680  0.86 -0.570 -0.88 -0.570 -0.510  0.470
disp -0.85  0.93  1.00  0.79 -0.710  0.89 -0.430 -0.71 -0.590 -0.590  0.540
hp   -0.78  0.90  0.79  1.00 -0.450  0.66 -0.710 -0.72 -0.240 -0.330  0.730
drat  0.68 -0.68 -0.71 -0.45  1.000 -0.71  0.091  0.44  0.710  0.740 -0.130
wt   -0.87  0.86  0.89  0.66 -0.710  1.00 -0.170 -0.55 -0.690 -0.680  0.500
qsec  0.42 -0.57 -0.43 -0.71  0.091 -0.17  1.000  0.74 -0.230 -0.150 -0.660
vs    0.66 -0.88 -0.71 -0.72  0.440 -0.55  0.740  1.00  0.170  0.300 -0.710
am    0.60 -0.57 -0.59 -0.24  0.710 -0.69 -0.230  0.17  1.000  0.870 -0.073
gear  0.54 -0.51 -0.59 -0.33  0.740 -0.68 -0.150  0.30  0.870  1.000  0.098
carb -0.66  0.47  0.54  0.73 -0.130  0.50 -0.660 -0.71 -0.073  0.098  1.000
\end{CodeOutput}

\noindent
Two significant figures keep the matrix on a single block; \code{print(mat, digits = 3)} restores three-figure rounding when greater precision is needed. The result contains four elements: \code{\$correlations} (a numeric matrix), \code{\$methods} (a character matrix of method codes), \code{\$types} (detected types), and \code{\$details} (a list of full \code{smartcor} objects for each pair).
Every cell, including the diagonal, is computed by an explicit method (the diagonal records, e.g., \code{phi} for binary-binary, \code{kendall} for ordinal-ordinal, etc. as appropriate).
The \code{tidy()} method returns a long-format tibble with one row per variable pair.

\paragraph{Comparison against base \code{cor()}.}
Calling \code{cor(mtcars)} silently treats every column as continuous, returning Pearson everywhere. To see where that silent treatment makes a real difference, we filter the cell-wise difference $\hat\rho_{\mathrm{smart}} - \hat\rho_{\mathrm{Pearson}}$ to entries with $|\Delta| > 0.05$. By construction this filter excludes every pair for which Pearson is the equivalent estimator (continuous--continuous, continuous--binary, and continuous--count), since smartcor returns the same numerical value there; only pairs touching an ordinal variable (\code{cyl}, \code{gear}, \code{carb}) survive:

\begin{CodeInput}
R> base_pearson <- cor(mtcars)
R> diff_mat <- mat$correlations - base_pearson
R> ix <- which(abs(diff_mat) > 0.05 & upper.tri(diff_mat), arr.ind = TRUE)
R> data.frame(
+    pair     = paste(rownames(diff_mat)[ix[,1]],
+                     colnames(diff_mat)[ix[,2]], sep = "-"),
+    smartcor = round(mat$correlations[ix], 2),
+    pearson  = round(base_pearson[ix], 2),
+    diff     = round(diff_mat[ix], 2)
+  )
\end{CodeInput}
\begin{CodeOutput}
        pair smartcor pearson  diff
1    mpg-cyl    -0.91   -0.85 -0.06
2     cyl-hp     0.90    0.83  0.07
3     cyl-wt     0.86    0.78  0.08
4     cyl-vs    -0.88   -0.81 -0.07
5   mpg-gear     0.54    0.48  0.06
6    hp-gear    -0.33   -0.13 -0.21
7    wt-gear    -0.68   -0.58 -0.09
8  qsec-gear    -0.15   -0.21  0.06
9    vs-gear     0.30    0.21  0.10
10   am-gear     0.87    0.79  0.08
11  mpg-carb    -0.66   -0.55 -0.11
12  cyl-carb     0.47    0.53 -0.06
13 disp-carb     0.54    0.39  0.14
14   wt-carb     0.50    0.43  0.07
15   vs-carb    -0.71   -0.57 -0.14
16   am-carb    -0.07    0.06 -0.13
17 gear-carb     0.10    0.27 -0.18
\end{CodeOutput}

\noindent
Every row that crosses the threshold contains at least one ordinal column. 
The largest gap is at \code{hp}--\code{gear} where base \code{cor()} reports $r = -0.13$ while the type-aware Spearman reports $r = -0.33$; \code{cor(mtcars)} silently treats \code{gear} as continuous and produces the smaller, attenuated number. 
Variable pairs where Pearson is the equivalent estimator (continuous--continuous, continuous--binary, and, with counts treated as continuous, continuous--count and count--count), \code{smart\_cormat()} produces the same numerical value, but the \code{methods} matrix records the conceptually-correct estimator name. 
A reader of the output can tell which cells happen to coincide with Pearson and which were a substantive choice; \code{cor(mtcars)} cannot make this distinction.

\subsection[Method comparison: compare\_methods()]{Method comparison: \code{compare\_methods()}} \label{sec:compare}

The \code{compare\_methods()} function computes all applicable methods for a given variable pair, allowing users to assess how sensitive the result is to method choice.

\begin{CodeInput}
R> compare_methods(mtcars$gear, mtcars$carb, verbose = FALSE)
\end{CodeInput}
\begin{CodeOutput}
-- Method Comparison --
Variables: mtcars$gear ("ordinal") x mtcars$carb ("ordinal")
N: 32

Polychoric Correlation <- recommended
r = 0.2451 p = 0.2490 95% CI [-0.1734, 0.5886]
Kendall's Tau-b
r = 0.0980 p = 0.5302 95% CI [-0.1455, 0.3303]
Spearman Rank Correlation
r = 0.1149 p = 0.5312 95% CI [-0.2447, 0.4467]
Goodman-Kruskal's Gamma
r = 0.1405 p = 0.5433 95% CI [-0.3046, 0.5353]
\end{CodeOutput}

\noindent
The arrow (\code{<- recommended}) indicates the default method that \code{smart\_cor()} would select.
Here polychoric is recommended because \code{assume_latent_normal} defaults to \code{"auto"}, the latent-normality test supports normality for this pair, and the data therefore meet polychoric's underlying assumption.
All four methods report analytic confidence intervals: Kendall via Fieller--Hartley--Pearson, Spearman via Bonett--Wright, polychoric via Olsson/J\"oreskog (Fisher~$z$ scale), and Goodman--Kruskal~$\gamma$ via Brown--Benedetti (\autoref{tab:inference-sources}).
The widely different CI widths reflect the different precision properties of each estimator at $n = 32$, not a switch between bootstrap and analytic inference.

\subsection{Visualization} \label{sec:viz}

The \code{ggcor\_heatmap()} and \code{ggcor\_method\_heatmap()} functions produce \pkg{ggplot2}-based heatmaps from a \code{smartcormat} object.
The correlation heatmap displays the estimated values with a diverging color scale, while the method heatmap shows which correlation method was used for each pair, making the analytic heterogeneity of a mixed-type dataset immediately visible.

\begin{CodeInput}
R> library("ggplot2")
R> ggcor_heatmap(mat)
R> ggcor_method_heatmap(mat)
\end{CodeInput}

\noindent
A base \proglang{R} alternative is also available via \code{plot(mat)}, although it is not as flexible as the \code{ggplot} due to inherent design of \code{ggplot}.

\subsection[Pipeline-friendly interface: smart\_cor\_df()]{Pipeline-friendly interface: \code{smart\_cor\_df()}} \label{sec:cordf}

A central design choice in \pkg{smartcor} is that every function returns an S3 object with a consistent shape, and every such object has a \code{tidy()} method \citep{R-broom} producing a one-row-per-pair tibble.
This composes naturally with the tidyverse \citep{R-tidyverse}, the native pipe \code{|>}, and downstream tools such as \pkg{dplyr} \citep{R-dplyr}, \pkg{purrr} \citep{R-purrr}, \pkg{ggplot2} \citep{R-ggplot2}, and presentation-table packages such as \pkg{gt} \citep{R-gt} and \pkg{kableExtra} \citep{R-kableExtra}.
The contrast with the heterogeneous return types of \code{stats::cor.test()} (an \code{htest} object), \code{polycor::polychor()} (a numeric scalar or a custom \code{polycor} list), and \code{stats::chisq.test()} (another \code{htest}) is what motivated the unified API: practitioners no longer have to write per-method wrangling code to assemble a comparable summary across mixed variable types.

The \code{smart\_cor\_df()} function is a non-interactive convenience wrapper that returns a plain list of two data frames (correlations and methods), with latent normality assessed per pair under the default \code{assume\_latent\_normal = "auto"}.
It is designed for use in automated pipelines where interactive prompts are undesirable.

\begin{CodeInput}
R> library("dplyr")
R> mat <- smart_cormat(data, verbose = FALSE, assume_latent_normal = TRUE)
R> tidy(mat) |>
+    filter(abs(estimate) > 0.3) |>
+    arrange(desc(abs(estimate)))
\end{CodeInput}

\noindent

The tidied data frame carries the full inference machinery for each pair, including \code{ci\_lower}, \code{ci\_upper}, \code{ci\_method}, \code{ci\_source}, \code{p\_method}, \code{null\_hypothesis}, and \code{rationale}, so downstream code can filter or annotate by source as easily as by estimate:

\begin{CodeInput}
R> mtcars |>
+    smart_cormat(verbose = FALSE) |>
+    tidy() |>
+    filter(ci_method == "analytic", abs(estimate) > 0.5) |>
+    select(var_x, var_y, method, estimate, ci_lower, ci_upper, ci_source)
\end{CodeInput}

\noindent
Group-wise correlations follow the same idiomatic approach via \pkg{dplyr}'s \code{group\_modify()}:

\begin{CodeInput}
R> mtcars |>
+    group_by(am) |>
+    group_modify(~ tidy(smart_cormat(.x, verbose = FALSE))) |>
+    ungroup()
\end{CodeInput}

\noindent
\pkg{purrr}'s \code{map\_dfr()} maps over arbitrary variable pairs while preserving the consistent return shape:

\begin{CodeInput}
R> pairs <- list(c("mpg","wt"), c("mpg","hp"), c("disp","qsec"))
R> purrr::map_dfr(pairs, ~ tidy(smart_cor(mtcars[[.x[1]]], mtcars[[.x[2]]],
+                                            verbose = FALSE)))
\end{CodeInput}

\noindent
For publication-quality output, \pkg{gt} or \pkg{kableExtra} can format the tidied tibble directly: \code{tidy(mat) |> gt::gt() |> gt::fmt\_number(decimals = 3)}.

\noindent
Note that \code{smart\_cor\_df()} fixes \code{verbose = FALSE} internally; users requiring verbose output should call \code{smart\_cormat()} directly.

\section{Case study: General Social Survey} \label{sec:casestudy}

To demonstrate \pkg{smartcor} on real data, we analyze a subset of the General Social Survey \citep[GSS;][]{GSS2024}, accessed via the \pkg{gssr} \proglang{R} package \citep{R-gssr}.
The GSS is a long-running, mostly annual survey of United States households administered by NORC, and it pairs continuous, ordinal, binary, and count items in a single instrument, which makes it a natural test bed for a package whose purpose is to match the correlation measure to the variable type.
Because \pkg{gssr} bundles the cumulative file and exposes the codebook through \proglang{R}'s help system, the variables and their response scales are inspectable without leaving the analysis session.\footnote{The package is maintained by Kieran Healy and hosted at \url{https://github.com/kjhealy/gssr}, with documentation at \url{https://kjhealy.github.io/gssr/}.}

\subsection{Data and variables} \label{sec:gss-data}

We select seven variables representing all four types:

\begin{itemize}
  \item \textbf{Continuous:} \code{age} (age in years), \code{coninc}
    (family income in constant dollars).
  \item \textbf{Ordinal:} \code{degree} (highest degree: less than high
    school, high school, junior college, bachelor, graduate),
    \code{happy} (general happiness, coded $1$ = very happy, $2$ = pretty happy, $3$ = not too happy, so negative correlations with \code{happy} indicate greater happiness).
  \item \textbf{Binary:} \code{sex} (male/female), \code{marital}
    (currently married vs. not, recoded from 5 to 2 unique categories by us).
  \item \textbf{Categorical:} \code{region} (4 Census regions: Northeast, Midwest, South, West).
\end{itemize}

The GSS distributes data in Stata format with \code{haven\_labelled} vectors, which inherit from \code{double}.
A researcher who loads the data and immediately calls \code{cor()} encounters no error:

\begin{CodeInput}
R> library("gssr")
R> library("dplyr")
R> gss <- gss_get_yr(2024)
R> gss_raw <- gss[, c("age", "coninc", "degree", "happy",
+                      "sex", "marital", "region")]
R> round(cor(gss_raw, use = "pairwise.complete.obs"), 3)
\end{CodeInput}
\begin{CodeOutput}
           age coninc degree  happy    sex marital region
age      1.000  0.023  0.051 -0.058  0.007  -0.380 -0.061
coninc   0.023  1.000  0.436 -0.150 -0.107  -0.330  0.055
degree   0.051  0.436  1.000 -0.087  0.004  -0.152  0.008
happy   -0.058 -0.150 -0.087  1.000 -0.017   0.216  0.008
sex      0.007 -0.107  0.004 -0.017  1.000   0.027 -0.033
marital -0.380 -0.330 -0.152  0.216  0.027   1.000 -0.009
region  -0.061  0.055  0.008  0.008 -0.033  -0.009  1.000
\end{CodeOutput}

\noindent
The silence is dangerous.
Every value in this matrix is a Pearson correlation computed on raw numeric codes.
The \code{region} column (where the integer codes $1$--$4$ label the four Census regions $1$ = Northeast, $2$ = Midwest, $3$ = South, $4$ = West) produces a small but meaningless correlation with income, an artifact of the arbitrary ordering imposed on an unordered category.
The \code{marital} column (1 = married, 2 = widowed, \ldots, 5 = never married) appears to ``correlate'' $-0.38$ with age, but this number reflects nothing about the actual association between age and marital status.
A careful analyst would convert \code{marital} from categorical with five categories to a binary variable (married/not married), as we share above and demonstrate below.
Similarly, \code{sex} is also converted to a 0/1 binary variable from original 1/2 coding, where $2 := \text{Female}$.
The \code{happy} and \code{degree} columns use ordinal codes as if they were continuous measurements.

A more careful analyst converts each variable to its appropriate \proglang{R} type before computing correlations:

\begin{CodeInput}
R> set.seed(2026)
R> gss_sub <- gss |>
+    transmute(
+      age     = as.numeric(age),
+      coninc  = as.numeric(coninc),
+      degree  = factor(haven::as_factor(degree), ordered = TRUE),
+      happy   = factor(haven::as_factor(happy), ordered = TRUE),
+      sex     = as.integer(as.numeric(sex) == 2),
+      marital = as.integer(as.numeric(marital) == 1),
+      region  = haven::as_factor(region)
+    ) |>
+    filter(complete.cases(age, coninc, degree, happy,
+                          sex, marital, region)) |>
+    slice(sample.int(n(), 3000))
\end{CodeInput}

\noindent
To make every result below reproducible identically in \proglang{R} and \proglang{Python}, we freeze this $3{,}000$-row complete-case extract as \code{gss\_2024\_casestudy.csv}, which both reproduction bundles read.

\noindent
But now \code{cor()} fails entirely:

\begin{CodeInput}
R> cor(gss_sub, use = "pairwise.complete.obs")
\end{CodeInput}
\begin{CodeOutput}
Error in cor(gss_sub, use = "pairwise.complete.obs") :
  'x' must be numeric
\end{CodeOutput}

\subsection{The naive workaround} \label{sec:naive}

\noindent
The typical workaround is to drop non-numeric columns and compute Pearson on whatever remains:

\begin{CodeInput}
R> gss_num <- gss_sub[, sapply(gss_sub, is.numeric)]
R> round(cor(gss_num, use = "pairwise.complete.obs"), 3)
\end{CodeInput}
\begin{CodeOutput}
           age coninc    sex marital
age      1.000  0.019 -0.009   0.141
coninc   0.019  1.000 -0.101   0.393
sex     -0.009 -0.101  1.000  -0.090
marital  0.141  0.393 -0.090   1.000
\end{CodeOutput}

\noindent
Three of the seven variables (\code{degree}, \code{happy}, \code{region}) are silently dropped.
The surviving four columns all receive Pearson correlations, which underestimate associations involving binary variables (\autoref{sec:equiv}) and ignore ordinal structure entirely.

\subsection[The smartcor approach]{The \pkg{smartcor} approach} \label{sec:smart-approach}

\begin{CodeInput}
R> mat <- smart_cormat(gss_sub, verbose = FALSE)
R> mat
\end{CodeInput}
\begin{CodeOutput}
-- Smart Correlation Matrix --

-- Variable types --

age: "count"
coninc: "continuous"
degree: "ordinal"
happy: "ordinal"
sex: "binary"
marital: "binary"
region: "categorical"

-- Correlations --

            age coninc degree   happy     sex marital region
age      1.0000  0.019  0.028 -0.0680 -0.0087   0.140  0.068
coninc   0.0190  1.000  0.450 -0.1700 -0.1000   0.390  0.092
degree   0.0280  0.450  1.000 -0.0790  0.0150   0.180  0.058
happy   -0.0680 -0.170 -0.079  1.0000 -0.0047  -0.230  0.035
sex     -0.0087 -0.100  0.015 -0.0047  1.0000  -0.090  0.063
marital  0.1400  0.390  0.180 -0.2300 -0.0900   1.000  0.067
region   0.0680  0.092  0.058  0.0350  0.0630   0.067  1.000
-- Methods used --

        age coninc degree happy sex marital region
age     Pr  Pr     Sp     Ps    PB  PB      CV
coninc  Pr  Pr     Sp     Sp    PB  PB      CV
degree  Sp  Sp     Pc     Kn    RB  RB      CV
happy   Ps  Sp     Kn     Pc    RB  RB      CV
sex     PB  PB     RB     RB    Ph  Ph      CV
marital PB  PB     RB     RB    Ph  Ph      CV
region  CV  CV     CV     CV    CV  CV      CV

i Legend:
"Pr" = Pearson Correlation
"Sp" = Spearman Rank Correlation
"Ps" = Polyserial Correlation
"PB" = Point-Biserial Correlation (= Pearson)
"CV" = Cramer's V
"Pc" = Polychoric Correlation
"Kn" = Kendall's Tau-b
"RB" = Rank-Biserial Correlation
"Ph" = Phi Coefficient (= Pearson for 0/1)
\end{CodeOutput}

\noindent
\code{age} is recognised as a count (integer years 18--89, 70+ unique values), and counts are routed through the same Pearson default as continuous variables, so \code{age}--\code{coninc} returns \code{Pr}. Every diagonal cell records the method that would have been chosen if that variable had been correlated with another variable of the same type: \code{Pr} for continuous-continuous and count-count pairs, \code{Pc} for ordinal--ordinal pairs, \code{Ph} for binary--binary pairs (under the default \code{assume\_latent\_normal = "auto"} a saturated $2 \times 2$ table resolves to $\phi$; tetrachoric requires \code{assume\_latent\_normal = TRUE}), and \code{CV} for categorical. The numeric value at every diagonal is $1.000$ by definition.

\noindent
The correlation matrix \code{mat} includes all seven variables. The methods matrix reveals the analytic heterogeneity: Pearson for continuous--continuous pairs, polychoric or Kendall's~$\tau$ for ordinal--ordinal pairs (the choice depends on the data-driven latent-normality test, which is rejected for \code{degree}--\code{happy} so \code{Kn} is selected, but the diagonals retain \code{Pc} as the conceptual default), polyserial or Spearman for continuous--ordinal and count--ordinal pairs (\code{age}--\code{degree} test rejects, hence \code{Sp}; \code{age}--\code{happy} accepts, hence \code{Ps}), point-biserial for continuous--binary pairs, $\phi$ for the binary--binary pair, and rank-biserial for binary--ordinal pairs.
The \code{region} column reports Cram\'{e}r's~$V$ values, which measure statistical association rather than directional correlation. These values are on a different scale ($[0, 1]$ with no sign) and should not be compared directly to correlation coefficients (see \autoref{sec:when}). The methods matrix identifies \code{CV} for these pairs, alerting the analyst to this distinction.

The heatmaps (Figures~\ref{fig:cor-heatmap} and~\ref{fig:method-heatmap}) provide a visual summary:

\begin{CodeInput}
R> ggcor_heatmap(mat)
R> ggcor_method_heatmap(mat)
\end{CodeInput}

\begin{figure}[t]
  \centering
  \includegraphics[width=0.8\textwidth]{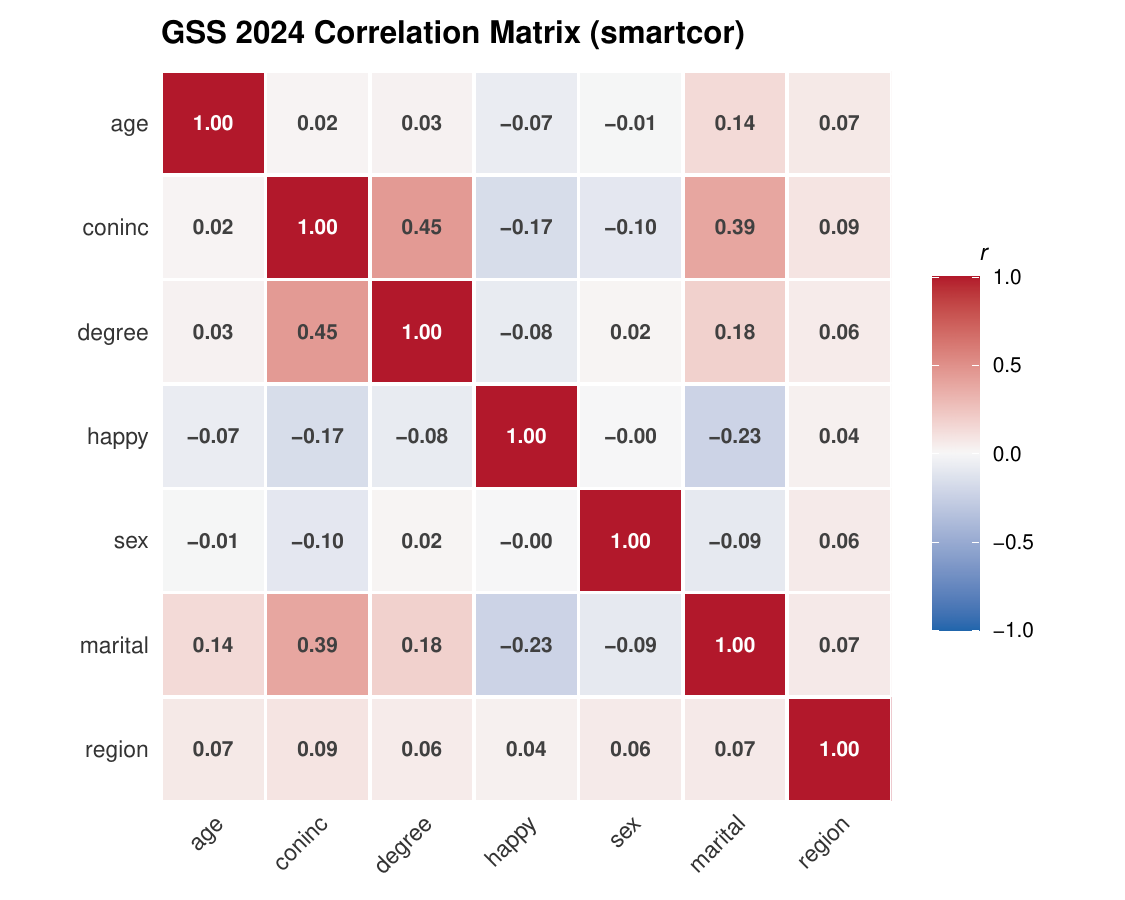}
  \caption{Correlation heatmap for the GSS subset using \code{smart\_cormat()}.
    Each cell displays the estimated correlation, with the color scale ranging
    from $-1$ (blue) to $+1$ (red). Different methods are used for each pair
    depending on variable types.}
  \label{fig:cor-heatmap}
\end{figure}

\begin{figure}[t]
  \centering
  \includegraphics[width=0.8\textwidth]{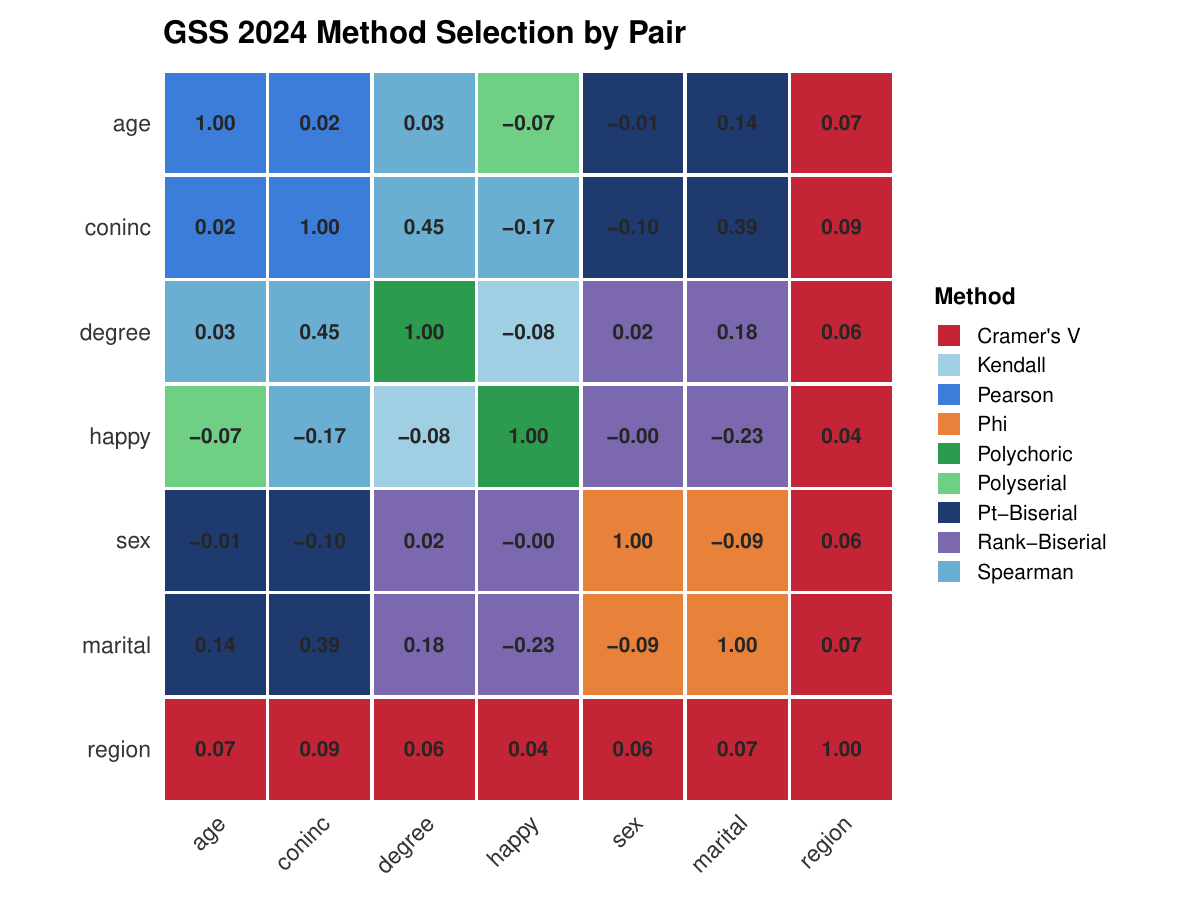}
  \caption{Method-selection heatmap showing which correlation method was
    selected for each variable pair. The analytic heterogeneity is immediately
    visible: nine distinct methods appear in a single correlation matrix.}
  \label{fig:method-heatmap}
\end{figure}

\subsection{Comparing methods for a specific pair} \label{sec:gss-compare}

To illustrate the impact of method choice, we compare all applicable methods for the \code{degree}--\code{happy} pair (ordinal--ordinal):

\begin{CodeInput}
R> compare_methods(gss_sub$degree, gss_sub$happy, verbose = FALSE)
\end{CodeInput}
\begin{CodeOutput}
-- Method Comparison --
Variables: gss_sub$degree ("ordinal") x gss_sub$happy ("ordinal")
N: 3000

Polychoric Correlation
r = -0.1031 p < 0.001 95% CI [-0.1456, -0.0601]
Kendall's Tau-b <- recommended
r = -0.0795 p < 0.001 95% CI [-0.1030, -0.0559]
Spearman Rank Correlation
r = -0.0906 p < 0.001 95% CI [-0.1261, -0.0549]
Goodman-Kruskal's Gamma
r = -0.1261 p < 0.001 95% CI [-0.1746, -0.0769]
\end{CodeOutput}

\noindent
All four methods report analytic 95\% confidence intervals and $p$-values, drawn from the sources catalogued in \autoref{tab:inference-sources}: Kendall (Fieller--Hartley--Pearson), Spearman (Bonett--Wright), polychoric (Olsson/J\"oreskog, Fisher~$z$), and Goodman--Kruskal~$\gamma$ (Brown--Benedetti). No bootstrap is invoked under the default \code{bootstrap = "auto"} for this pair.
The polychoric estimate ($-0.103$) exceeds the Spearman ($-0.091$) and Kendall ($-0.079$) estimates in absolute value, consistent with the attenuation theory demonstrated in \autoref{sec:attenuation}. 
But since the latent-normality test is rejected for this pair, \code{compare\_methods()} recommends Kendall's~$\tau$, the value \code{smart\_cor()} would report by default.
Gamma ($-0.126$) is the largest in absolute value but also has the widest interval ($[-0.175, -0.077]$), because it ignores tied pairs entirely and the resulting effective sample size is smaller.
This comparison enables the researcher to make an informed choice about which estimate to report; \code{smart\_cor()} provides a transparent default while \code{compare\_methods()} reveals the sensitivity.

\subsection{Pipeline integration} \label{sec:gss-pipeline}

The tidy output integrates smoothly with \pkg{dplyr} pipelines:

\begin{CodeInput}
R> tidy(mat) |>
+    filter(x_type == "ordinal" | y_type == "ordinal") |>
+    select(var_x, var_y, method, estimate) |>
+    arrange(desc(abs(estimate)))
\end{CodeInput}
\begin{CodeOutput}
# A tibble: 11 x 4
   var_x  var_y   method        estimate
   <chr>  <chr>   <chr>            <dbl>
 1 coninc degree  spearman        0.451
 2 happy  marital rank_biserial  -0.230
 3 degree marital rank_biserial   0.185
 4 coninc happy   spearman       -0.168
 5 degree happy   kendall        -0.0795
 6 age    happy   polyserial     -0.0676
 7 degree region  cramers_v       0.0580
 8 happy  region  cramers_v       0.0351
 9 age    degree  spearman        0.0276
10 degree sex     rank_biserial   0.0154
11 happy  sex     rank_biserial  -0.00474
\end{CodeOutput}

\noindent
This pipeline extracts all pairs involving at least one ordinal variable and sorts them by correlation strength, allowing the researcher to quickly identify the strongest associations.

\subsection{Key findings} \label{sec:gss-findings}

The comparison of naive Pearson versus type-aware \pkg{smartcor} analysis reveals three patterns.
\textit{First}, the latent-variable estimators recover more association than Pearson where they apply: \code{compare\_methods()} reports polychoric $-0.103$ versus Pearson $-0.091$ for \code{degree}--\code{happy}, consistent with the attenuation theory in \autoref{sec:attenuation}; for this particular pair, however, the latent-normality test rejects, so \code{smart\_cor()} conservatively reports Kendall's~$\tau$ instead.
\textit{Second}, the categorical variable \code{region} now has association measures (Cram\'{e}r's~$V$) with all other variables, whereas the naive approach excluded it entirely. 
These are measures of statistical association rather than correlation proper (see \autoref{sec:when}), but they provide useful information about dependence that would otherwise be lost.
\textit{Third}, the methods matrix makes the analytic heterogeneity visible: eight different methods are used across the variable pairs of a single correlation matrix (nine counting the diagonal's self-pair method), each chosen for a principled reason.

\subsection{Overriding type detection} \label{sec:gss-override}

The automatic type detection in \code{smart\_cor()} can be overridden for individual variables using the \code{x\_type} and \code{y\_type} arguments.
This is useful when the heuristic classification does not match domain knowledge.
For example, \code{mtcars\$cyl} has only three unique values (4, 6, 8) and is automatically detected as ordinal, but a researcher might prefer to treat it as continuous:

\begin{CodeInput}
R> # Default: cyl detected as ordinal -> Spearman
R> smart_cor(mtcars$mpg, mtcars$cyl,
+    assume_latent_normal = FALSE, verbose = FALSE)
\end{CodeInput}
\begin{CodeOutput}
-- Smart Correlation --
Estimate: -0.9108
Method: Spearman Rank Correlation
Variables: mtcars$mpg ("continuous") x mtcars$cyl ("ordinal")
N: 32
p-value: < 0.001
Test: t approximation (cor.test, exact = FALSE)
H0: rho_S = 0
Small p-values (e.g., p < 0.05) indicate evidence against H0.
95% CI: [-0.9615, -0.8002]
Source: Bonett and Wright (2000)

One variable is continuous and the other is ordinal. Spearman rank correlation
selected (no latent normality assumption).

i Alternatives: "kendall" and "polyserial" (pass `method = "..."` to use)
\end{CodeOutput}

\begin{CodeInput}
R> # Override: treat cyl as continuous -> Pearson
R> smart_cor(mtcars$mpg, mtcars$cyl,
+    y_type = "continuous", verbose = FALSE)
\end{CodeInput}
\begin{CodeOutput}
-- Smart Correlation --
Estimate: -0.8522
Method: Pearson Correlation
Variables: mtcars$mpg ("continuous") x mtcars$cyl ("continuous")
N: 32
p-value: < 0.001
Test: t-test on r (cor.test)
H0: rho = 0
Small p-values (e.g., p < 0.05) indicate evidence against H0.
95% CI: [-0.9258, -0.7163]
Source: Fisher z (cor.test)

Both variables are continuous; Pearson correlation selected.

i Alternatives: "spearman" and "kendall" (pass `method = "..."` to use)
\end{CodeOutput}

\noindent
The Spearman estimate ($-0.911$) exceeds the Pearson ($-0.852$) because Spearman captures the strong monotonic relationship without assuming linearity.
The \code{x\_type} argument overrides only the specified variable; the other retains its auto-detected type.
This design ensures that type overrides are explicit and local, rather than requiring a global reconfiguration.

\section{Comparison with existing software} \label{sec:comparison}

Several \proglang{R} packages compute correlations for mixed variable types, but none covers all 10 type-pair combinations or explains its method choices.
Table~\ref{tab:feature-comparison} compares \pkg{smartcor} with three widely used alternatives.

\begin{table}[t]
\centering
\caption{Feature comparison of \proglang{R} packages for mixed-type correlation analysis.}
\label{tab:feature-comparison}
\begin{tabular}{lcccc}
\toprule
Feature & \pkg{smartcor} & \pkg{correlation} & \pkg{polycor} & \pkg{psych} \\
\midrule
Auto type detection & 5 types & Partial & Limited & Threshold \\
All 10 type-pair combos & Yes & No & No & No \\
Categorical measures & Yes & No & No & No \\
Latent normality control & Yes & No & No & No \\
Explains method choice & Yes & No & No & No \\
Tidy output & Yes & Yes & No & No \\
\code{compare\_methods()} & Yes & No & No & No \\
\proglang{Python} companion & Yes & No & No & No \\
\bottomrule
\end{tabular}
\end{table}

\subsection{Code comparison} \label{sec:code-comparison}

Consider computing a correlation matrix for a data frame with continuous, binary, ordinal, and categorical variables.
With \pkg{smartcor}, this requires a single function call:

\begin{CodeInput}
R> smart_cormat(data, assume_latent_normal = TRUE, verbose = FALSE)
\end{CodeInput}

\noindent
With \pkg{polycor}, the user must manually specify which variables are ordinal and binary, and only polychoric/polyserial correlations are computed:

\begin{CodeInput}
R> library("polycor")
R> hetcor(data, ML = TRUE)
\end{CodeInput}

\noindent
The \code{hetcor()} function from \pkg{polycor} handles Pearson, polyserial, and polychoric correlations but ignores categorical variables entirely: it computes neither Cram\'{e}r's~$V$ nor Theil's~$U$, and it provides no rationale for its choices.
The \code{psych::mixedCor()} function uses a numeric threshold to distinguish continuous from categorical variables but does not differentiate binary from ordinal, limiting its coverage of type-pair combinations.

The \pkg{correlation} package \citep{R-correlation} provides a \code{method = "auto"} option, but its auto-detection does not reliably distinguish ordinal from continuous variables, and it does not support categorical association measures.

\subsection{Key differentiators} \label{sec:differentiators}

\pkg{smartcor} is a \emph{decision-support tool}, not merely a method library.
No other package provides: (1)~a human-readable rationale for each method choice; (2)~coverage of all 10 variable-type pair combinations including categorical measures; (3)~a \code{compare\_methods()} function for sensitivity analysis; or (4)~selection logic grounded in simulation evidence (\autoref{sec:when}).

\section[The pysmartcor package for Python]{The \pkg{pysmartcor} package for \proglang{Python}} \label{sec:python}

A companion \proglang{Python} package provides the same decision logic and API.
The package uses \pkg{NumPy} \citep{numpy}, \pkg{SciPy} \citep{scipy}, and \pkg{pandas} \citep{pandas} for core computation; polychoric and polyserial correlations are estimated by a vendored, corrected port of \pkg{semopy}'s maximum-likelihood routines (the role \pkg{polycor} plays in the \proglang{R} implementation), so these three libraries are the only runtime dependencies.\footnote{Vendoring became necessary for two reasons.
First, \pkg{semopy}~2.3.11 evaluates bivariate-normal rectangle probabilities through \code{scipy.stats.mvn}, a private interface removed in SciPy~1.14, so its polychoric and polyserial routines raise \code{AttributeError} on any recent SciPy; the port replaces those calls with \code{scipy.stats.multivariate\_normal.cdf} via inclusion--exclusion.
Second, \pkg{semopy}'s polyserial likelihood standardizes the continuous variable as $z = (x - \bar{x})/s^2$ --- dividing by the variance where the standard deviation is required --- and additionally passes the variance to the \code{scale} argument of \code{norm.logpdf}, which expects a standard deviation; both errors bias $\hat\rho$.
The vendored port fixes both and is distributed under \pkg{semopy}'s MIT license with attribution.}

\subsection{Installation}

\pkg{pysmartcor}~1.0.0 is published on the Python Package Index (PyPI) at \url{https://pypi.org/project/pysmartcor/} and installs with \code{pip}:

\begin{CodeInput}
$ pip install pysmartcor
$ pip install pysmartcor[viz]   # includes matplotlib, seaborn
\end{CodeInput}

\subsection{API parity}

The \proglang{Python} API mirrors the \proglang{R} API:

\begin{CodeInput}
>>> from pysmartcor import smart_cor, smart_cormat, compare_methods
>>> result = smart_cor(x, y, assume_latent_normal=True, verbose=False)
>>> round(result.estimate, 4)
-0.8677
>>> result.to_dataframe()   # equivalent to tidy() in R
\end{CodeInput}

\noindent
The same functions are available: \code{smart\_cor()}, \code{smart\_cormat()}, \code{smart\_cor\_df()}, \code{compare\_methods()}, \code{detect\_type()}, and \code{available\_methods()}.

Table~\ref{tab:api-comparison} shows the correspondence between key elements of the \proglang{R} and \proglang{Python} APIs.

\begin{table}[h]
\centering
\caption{API correspondence between the \proglang{R} and \proglang{Python} packages.}
\label{tab:api-comparison}
\begin{tabular}{ll}
\toprule
\proglang{R} & \proglang{Python} \\
\midrule
\code{result\$estimate} & \code{result.estimate} \\
\code{tidy(result)} & \code{result.to\_dataframe()} \\
\code{tidy(mat)} & \code{mat.to\_long()} \\
\code{mat\$correlations} & \code{mat.correlations} \\
\code{ggcor\_heatmap(mat)} & \code{cor\_heatmap(mat)} \\
\code{smart\_cor\_df(data)} & \code{smart\_cor\_df(data)} \\
\bottomrule
\end{tabular}
\end{table}

\subsection{Visualization}

The optional \code{viz} module provides \pkg{matplotlib} \citep{matplotlib} and \pkg{seaborn} \citep{seaborn} based equivalents of the \proglang{R} heatmap functions:

\begin{CodeInput}
>>> from pysmartcor.viz import cor_heatmap, method_heatmap
>>> fig = cor_heatmap(mat)
>>> fig = method_heatmap(mat)
\end{CodeInput}

\section{Concluding remarks} \label{sec:summary}

Pearson correlation remains the default in virtually every statistical software environment, yet it is only one member of a large family of association measures, each designed for specific variable types and distributional assumptions.
When applied indiscriminately to binary, ordinal, or mixed-type data, Pearson's $r$ can attenuate, inflate, or otherwise distort the strength of association.
The \pkg{smartcor} package addresses this gap by automatically detecting variable types, selecting the statistically appropriate method from a library of 14~correlation measures, and explaining the rationale behind each choice.
The Monte~Carlo simulations presented in \autoref{sec:when} quantified when method selection matters and when it does not, while the General Social Survey case study in \autoref{sec:casestudy} demonstrated that type-aware correlation analysis yields substantively different results on real data.
Both the \proglang{R} and \proglang{Python} implementations share identical decision logic, ensuring reproducibility across platforms.
Both implementations are distributed through the standard repositories: \pkg{smartcor} via CRAN and \pkg{pysmartcor} via PyPI (\url{https://pypi.org/project/pysmartcor/}).

The simulation evidence supports several practical conclusions.
Pearson correlation is numerically identical to the point-biserial coefficient for continuous-binary pairs and to the $\phi$ coefficient for binary-binary pairs, provided the binary variables are coded 0/1; practitioners gain nothing by switching methods in these cases.
The picture changes for ordinal data.
Discretizing a continuous variable into ordered categories attenuates the observed correlation by roughly 2--21\%, with fewer categories producing greater underestimation (dichotomization is worse still, attenuating by up to ${\approx}37\%$).
Polychoric and polyserial correlations recover the latent association when the underlying bivariate normality assumption holds, but they exhibit substantial bias when it does not.
Among rank-based alternatives, Kendall's $\tau_b$ proved the most robust estimator across non-normal latent distributions, making it the safer default when distributional assumptions are uncertain.

The current implementation has clear boundaries.
The \code{assume\_latent\_normal} argument gives users explicit control over the latent normality assumption: \code{TRUE} and \code{FALSE} force the choice, \code{NULL} asks interactively, and \code{"auto"} runs the likelihood-ratio goodness-of-fit test described in \autoref{sec:normality} to decide from the data.
The \code{"auto"} option implements the model-fit test proposed by \citet{joreskog2005structural}, selecting polychoric correlation when the bivariate normal threshold model fits the observed cell frequencies and falling back to Kendall's~$\tau$ when it does not.
All correlations are bivariate: controlling for confounders through partial correlation requires separate analysis.
The inferential framework is entirely frequentist.
Every supported method except Theil's~$U$ ships with an analytic confidence interval by default; the closed-form / asymptotic sources are catalogued in \autoref{tab:inference-sources}.
Bootstrap confidence intervals remain available via the \code{bootstrap} argument as either an opt-in (\code{bootstrap = TRUE}) for cross-checking analytic intervals, or as automatic fallback when an analytic interval fails (e.g., singular Hessian, sparse $2 \times 2$ table); the analytic or permutation $p$-value is retained.
When \code{bootstrap = "auto"} (the default), the percentile bootstrap is invoked only when no analytic CI is available for the chosen method.
The \code{conf\_level} parameter controls the confidence level (default: $0.95$), and \code{n\_boot} sets the number of bootstrap replications (default: $500$, lowered from $1{,}000$ now that bootstrap is a fallback path rather than the primary one).
These boundaries reflect deliberate design choices rather than oversights; each represents a direction for future development.

\section*{Acknowledgments}

The \pkg{smartcor} package relies on the foundational \proglang{R} package \pkg{polycor} for polychoric, polyserial, and tetrachoric correlation estimation.
We thank its maintainers for their sustained contributions to the \proglang{R} ecosystem.

\bibliographystyle{abbrvnat}
\bibliography{smartcor}

\newpage
\appendix

\section{Proofs of correlation equivalences} \label{app:proofs}

\subsection{Equality of Pearson and point-biserial correlations}
\label{app:pearson-biserial}

\begin{assumption}\label{ass:biserial}
The data $\{(x_i, y_i)\}_{i=1}^n$ satisfy:
(i) $n \geq 2$;
(ii) $y_i \in \{0,1\}$ with both groups non-empty, so $n_1 = \sum_i y_i$ and $n_0 = n - n_1$ both lie in $\{1, \ldots, n-1\}$, equivalently $\hat{p} = n_1/n \in (0,1)$;
(iii) $x_i \in \mathbb{R}$ has positive sample variance, $s_x^2 > 0$.
\end{assumption}

\begin{theorem}[Pearson equals point-biserial]\label{thm:pearson-biserial}
Under Assumption~\ref{ass:biserial}, the Pearson sample correlation between $x$ and $y$ coincides with the point-biserial coefficient,
\[
r \;=\; \frac{\bar{x}_1 - \bar{x}_0}{s_x}\sqrt{\hat{p}(1-\hat{p})} \;=\; r_{pb},
\]
where $\bar{x}_j = n_j^{-1}\sum_{y_i=j} x_i$ for $j \in \{0,1\}$.
\end{theorem}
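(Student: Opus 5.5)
The plan is to compute the three ingredients of the Pearson formula in \autoref{eq:pearson_sample} directly, using the fact that $y$ is a $0/1$ indicator. First I fix the normalization convention, because it affects the constant: $s_x$ is the population-style standard deviation, $s_x^2 = n^{-1}\sum_i (x_i-\bar x)^2$. This is the convention under which \autoref{eq:pointbiserial} and the displayed form with $\sqrt{\hat p(1-\hat p)}$ agree. With the $n-1$ divisor, a factor $\sqrt{(n-1)/n}$ would appear, so I will state the convention explicitly at the outset.

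Step 1 concerns the binary variable. Since $y_i^2 = y_i$, we have $\bar y = \hat p$ and $\sum_i (y_i - \bar y)^2 = \sum_i y_i - n\hat p^2 = n\hat p(1-\hat p)$. Assumption~\ref{ass:biserial}(ii) makes this quantity positive.

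Step 2 concerns the cross term. Expand
\[
\sum_i (x_i-\bar x)(y_i-\bar y) = \sum_i x_i y_i - n\bar x\bar y = n_1\bar x_1 - n\hat p\,\bar x .
\]
Then substitute $n\bar x = n_1\bar x_1 + n_0\bar x_0$ and $n_1 = n\hat p$, $n_0 = n(1-\hat p)$. This gives
\[
n\hat p\,\bar x_1 - \hat p\,(n\hat p\,\bar x_1 + n(1-\hat p)\bar x_0) = n\hat p(1-\hat p)(\bar x_1-\bar x_0).
\]

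Step 3 assembles the pieces. Dividing, we get
\[
r = \frac{n\hat p(1-\hat p)(\bar x_1-\bar x_0)}{\sqrt{n s_x^2}\,\sqrt{n\hat p(1-\hat p)}} = \frac{\bar x_1-\bar x_0}{s_x}\sqrt{\hat p(1-\hat p)}.
\]
This last expression equals $r_{pb}$ in \autoref{eq:pointbiserial}, since $n_1n_0/n^2 = \hat p(1-\hat p)$. Assumptions (ii) and (iii) guarantee that all denominators are nonzero and that the group means $\bar x_0$ and $\bar x_1$ are defined.

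The only real obstacle is the bookkeeping in Step 2, namely rewriting $\bar x$ as the weighted average of the two group means so that the factor $\hat p(1-\hat p)$ emerges. Beyond that, the remaining care is in the standard-deviation convention, which must be stated so that the identity is exact rather than holding only up to a factor $\sqrt{(n-1)/n}$.
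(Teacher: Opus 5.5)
Your proof is correct and follows essentially the same route as the paper: compute the spread of the $0/1$ variable as $n\hat p(1-\hat p)$, write $\bar x = \hat p\bar x_1 + (1-\hat p)\bar x_0$ so that the cross term collapses to $n\hat p(1-\hat p)(\bar x_1-\bar x_0)$, and divide. The paper splits the covariance sum by group rather than expanding $\sum_i x_iy_i - n\bar x\bar y$, which is immaterial, but your explicit choice of the divisor-$n$ standard deviation for $s_x$ is cleaner than the paper: its proof normalizes $S_{xy}$ and $s_y^2$ by $n-1$ and reaches the displayed closed form only by silently dropping a factor $\sqrt{n/(n-1)}$, which is exactly the discrepancy you flagged.
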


\begin{proof}
The overall mean decomposes as $\bar{x} = \hat{p}\bar{x}_1 + (1-\hat{p})\bar{x}_0$. Because $y_i \in \{0,1\}$, the deviation $y_i - \hat{p}$ takes only the two values $1-\hat{p}$ and $-\hat{p}$, so the sample covariance becomes
\begin{align*}
S_{xy} &= \frac{1}{n-1}\sum_{i=1}^n (x_i-\bar{x})(y_i-\hat{p})\\
&= \frac{1}{n-1}\left[n_1(\bar{x}_1-\bar{x})(1-\hat{p}) + n_0(\bar{x}_0-\bar{x})(-\hat{p})\right].
\end{align*}
Substituting $\bar{x}_1 - \bar{x} = (1-\hat{p})(\bar{x}_1 - \bar{x}_0)$ and $\bar{x}_0 - \bar{x} = -\hat{p}(\bar{x}_1 - \bar{x}_0)$ collapses both terms,
\[
S_{xy} = \frac{n}{n-1}\,\hat{p}(1-\hat{p})(\bar{x}_1 - \bar{x}_0).
\]
For the binary variable, $s_y^2 = \frac{n}{n-1}\,\hat{p}(1-\hat{p})$, strictly positive by Assumption~\ref{ass:biserial}(ii). Combined with $s_x > 0$ from Assumption~\ref{ass:biserial}(iii),
\[
r = \frac{S_{xy}}{s_x s_y} = \frac{\bar{x}_1 - \bar{x}_0}{s_x}\sqrt{\hat{p}(1-\hat{p})} = r_{pb}. \qedhere
\]
\end{proof}

\begin{remark}[Population analogue]
With $Y \in \{0,1\}$, $p = \Pr(Y=1) \in (0,1)$, $\mu_j = \mathbb{E}[X|Y=j]$, and $\sigma_X > 0$, the same argument yields $\rho = (\mu_1 - \mu_0)\sigma_X^{-1}\sqrt{p(1-p)}$.
\end{remark}

\begin{remark}[On the coding of $y$]
Pearson's $r$ is invariant under any positive affine recoding $y \mapsto \gamma y + \delta$ with $\gamma > 0$, so the numerical correlation is unchanged if $y$ is coded as $\{-1, +1\}$ rather than $\{0,1\}$. The closed form $r_{pb} = (\bar{x}_1 - \bar{x}_0)s_x^{-1}\sqrt{\hat{p}(1-\hat{p})}$ is, however, derived under $0/1$ coding and takes different algebraic shapes under other codings, even though the numerical value of the correlation it represents is the same.
\end{remark}

\subsection{Equality of Pearson and $\phi$ for two binary variables}
\label{app:pearson-phi}

\begin{assumption}\label{ass:phi}
The data $\{(x_i, y_i)\}_{i=1}^n$ satisfy:
(i) $n \geq 2$;
(ii) $x_i, y_i \in \{0,1\}$;
(iii) both marginals are non-degenerate, i.e., neither row sum nor column sum of the $2\times 2$ contingency table equals $0$ or $n$.
\end{assumption}

\begin{theorem}[Pearson equals $\phi$]\label{thm:pearson-phi}
Under Assumption~\ref{ass:phi}, the Pearson sample correlation between $x$ and $y$ equals the $\phi$ coefficient,
\[
r \;=\; \frac{N_{11}N_{00} - N_{10}N_{01}}{\sqrt{N_{1\cdot}\,N_{0\cdot}\,N_{\cdot 1}\,N_{\cdot 0}}} \;=\; \phi,
\]
where $N_{ij}$ is the count of observations with $x = i$ and $y = j$, and $N_{i\cdot}$, $N_{\cdot j}$ denote row and column marginals respectively.
\end{theorem}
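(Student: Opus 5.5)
The plan is to prove the identity by direct computation of the three sums in the Pearson formula, expressed through the cell counts of the $2\times 2$ table. Because $x_i,y_i\in\{0,1\}$, every product is an indicator: $x_i^2=x_i$, $y_i^2=y_i$, and $x_iy_i=\mathbf{1}\{x_i=1,y_i=1\}$. So $\sum_i x_i = N_{1\cdot}$, $\sum_i y_i = N_{\cdot 1}$, and $\sum_i x_iy_i = N_{11}$, with $n = N_{11}+N_{10}+N_{01}+N_{00}$. I will work with the uncentered forms
\[
\textstyle\sum_i (x_i-\bar x)(y_i-\bar y)=\sum_i x_iy_i-\tfrac1n\sum_i x_i\sum_i y_i,
\qquad
\sum_i (x_i-\bar x)^2=\sum_i x_i^2-\tfrac1n\bigl(\sum_i x_i\bigr)^2 .
\]
Using these, I avoid group means altogether. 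Alternatively, one could invoke Theorem~\ref{thm:pearson-biserial} with $x$ binary, but the direct count route is cleaner.

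\textbf{Numerator.} $n\sum_i(x_i-\bar x)(y_i-\bar y)=nN_{11}-N_{1\cdot}N_{\cdot1}$. I substitute $n=N_{11}+N_{10}+N_{01}+N_{00}$, $N_{1\cdot}=N_{11}+N_{10}$ and $N_{\cdot1}=N_{11}+N_{01}$. Expanding, the terms $N_{11}^2$, $N_{11}N_{10}$ and $N_{11}N_{01}$ cancel, leaving $N_{11}N_{00}-N_{10}N_{01}$. This cancellation is the only step with genuine content, and I expect it to be the main place to be careful.

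\textbf{Denominator.} $n\sum_i(x_i-\bar x)^2=nN_{1\cdot}-N_{1\cdot}^2=N_{1\cdot}(n-N_{1\cdot})=N_{1\cdot}N_{0\cdot}$, and likewise $n\sum_i(y_i-\bar y)^2=N_{\cdot1}N_{\cdot0}$. Assumption~\ref{ass:phi}(iii) makes all four marginals lie strictly between $0$ and $n$, so both sums of squares are positive and $r$ is well defined.

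\textbf{Assembly.} Multiply the numerator and denominator of the sample formula \eqref{eq:pearson_sample} by $n$. The factor $n$ in the numerator matches $\sqrt{n\cdot n}$ in the denominator, giving
\[
r=\frac{N_{11}N_{00}-N_{10}N_{01}}{\sqrt{N_{1\cdot}N_{0\cdot}N_{\cdot1}N_{\cdot0}}}.
\]
This is \eqref{eq:phi} once its marginal sums are identified with $N_{1\cdot},N_{0\cdot},N_{\cdot1},N_{\cdot0}$. One bookkeeping caution: in \eqref{eq:phi} the factors are written as $N_{11}+N_{10}$ and so on, and under the row-is-$x$ convention stated here these are exactly the four marginals, so the product in the denominator matches.
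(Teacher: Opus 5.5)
Your proof is correct and follows essentially the same route as the paper. Both reduce the covariance and the two variances to cell counts using $x_i^2 = x_i$ and $y_i^2 = y_i$, both rely on the cancellation $nN_{11} - N_{1\cdot}N_{\cdot 1} = N_{11}N_{00} - N_{10}N_{01}$, and both take positivity of the variances from Assumption~\ref{ass:phi}(iii). The only difference is cosmetic: you scale the raw sums by $n$, while the paper uses the $1/(n-1)$-normalized moments $S_{xy}$, $s_x^2$, $s_y^2$, whose common factor $n(n-1)$ cancels in the ratio.
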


\begin{proof}
Arrange the contingency counts as
\[
\begin{array}{c|cc|c}
 & y=1 & y=0 & \text{row sums}\\\hline
x=1 & N_{11} & N_{10} & N_{1\cdot}\\
x=0 & N_{01} & N_{00} & N_{0\cdot}\\\hline
\text{col sums} & N_{\cdot 1} & N_{\cdot 0} & n
\end{array}
\]
with $n = N_{11} + N_{10} + N_{01} + N_{00}$, and write $\bar{x} = N_{1\cdot}/n$, $\bar{y} = N_{\cdot 1}/n$. Because indicators satisfy $x_i^2 = x_i$ and $y_i^2 = y_i$,
\begin{align*}
s_x^2 &= \frac{n}{n-1}\,\bar{x}(1-\bar{x}) = \frac{N_{1\cdot}\,N_{0\cdot}}{n(n-1)},\\
s_y^2 &= \frac{n}{n-1}\,\bar{y}(1-\bar{y}) = \frac{N_{\cdot 1}\,N_{\cdot 0}}{n(n-1)},
\end{align*}
both strictly positive by Assumption~\ref{ass:phi}(iii). For the covariance,
\begin{align*}
S_{xy}
&= \frac{1}{n-1}\sum_{i=1}^n (x_i-\bar{x})(y_i-\bar{y})
= \frac{1}{n-1}\left(\sum_{i=1}^n x_i y_i - n\bar{x}\bar{y}\right)\\
&= \frac{1}{n-1}\left(N_{11} - \frac{N_{1\cdot}\,N_{\cdot 1}}{n}\right)
= \frac{N_{11}N_{00} - N_{10}N_{01}}{n(n-1)},
\end{align*}
where the last equality uses $N_{11} n - N_{1\cdot} N_{\cdot 1} = N_{11}(N_{11}+N_{10}+N_{01}+N_{00}) - (N_{11}+N_{10})(N_{11}+N_{01}) = N_{11}N_{00} - N_{10}N_{01}$. Hence
\[
r = \frac{S_{xy}}{s_x s_y}
= \frac{(N_{11}N_{00} - N_{10}N_{01})/[n(n-1)]}{\sqrt{N_{1\cdot}\,N_{0\cdot}\,N_{\cdot 1}\,N_{\cdot 0}}/[n(n-1)]}
= \frac{N_{11}N_{00} - N_{10}N_{01}}{\sqrt{N_{1\cdot}\,N_{0\cdot}\,N_{\cdot 1}\,N_{\cdot 0}}},
\]
which is the definition of $\phi$. \qedhere
\end{proof}

\begin{remark}[Population analogue]
Let $X, Y \in \{0,1\}$ with $p_x = \Pr(X=1) \in (0,1)$, $p_y = \Pr(Y=1) \in (0,1)$, and joint probabilities $\{p_{ij}\}_{i,j \in \{0,1\}}$. Then $\operatorname{Var}(X) = p_x(1-p_x)$, $\operatorname{Var}(Y) = p_y(1-p_y)$, and $\operatorname{Cov}(X,Y) = p_{11} - p_x p_y = p_{11}p_{00} - p_{10}p_{01}$, so
\[
\rho = \frac{p_{11}p_{00} - p_{10}p_{01}}{\sqrt{p_x(1-p_x)\,p_y(1-p_y)}} = \phi.
\]
\end{remark}

\begin{remark}[Invariance to coding]
Both $r$ and $\phi$ are invariant under any positive affine recoding $x \mapsto \alpha x + \beta$, $y \mapsto \gamma y + \delta$ with $\alpha, \gamma > 0$, provided the convention that the larger of the two coded values labels the "$1$" row (or column) of the contingency table is preserved. Swapping the row labels (or column labels) flips the sign of $\phi$, matching the corresponding sign flip of $r$.
\end{remark}

\begin{remark}[Degeneracy]
If any marginal count is zero or equals $n$ (i.e., $N_{i\cdot} \in \{0, n\}$ or $N_{\cdot j} \in \{0, n\}$), the corresponding sample variance vanishes and both $r$ and $\phi$ are undefined. Assumption~\ref{ass:phi}(iii) rules this out.
\end{remark}

\end{document}